\newtheorem{theorem}{Theorem}
\newtheorem{definition}[theorem]{Definition}
\newtheorem{lemma}[theorem]{Lemma}
\newenvironment{proof}[1][Proof]{\noindent\textbf{#1.} }{\ \rule{0.5em}{0.5em}}
\begin{document}
\preprint{ }
\title{Einstein, incompleteness, and the epistemic view of quantum states}
\author{Nicholas Harrigan}
\affiliation{QOLS, Blackett Laboratory, Imperial College London,
Prince Consort Road, London SW7 2BW, United Kingdom}
\author{Robert W. Spekkens}
\affiliation{Department of Applied Mathematics and Theoretical Physics, University of
Cambridge, Cambridge CB3 0WA, United Kingdom}
\date{June 15, 2007}

\begin{abstract}
Does the quantum state represent reality or our knowledge of
reality? In making this distinction precise, we are led to a novel
classification of hidden variable models of quantum theory. Indeed,
representatives of each class can be found among existing
constructions for two-dimensional Hilbert spaces. Our approach also
provides a fruitful new perspective on arguments for the nonlocality
and incompleteness of quantum theory. Specifically, we show that for
models wherein the quantum state has the status of something real,
the failure of locality can be established through an argument
considerably more straightforward than Bell's theorem. The
historical significance of this result becomes evident when one
recognizes that the same reasoning is present in Einstein's
preferred argument for incompleteness, which dates back to 1935.
This fact suggests that Einstein was seeking not just \textit{any}
completion of quantum theory, but one wherein quantum states are
solely representative of our knowledge. Our hypothesis is supported
by an analysis of Einstein's attempts to clarify his views on
quantum theory and the circumstance of his otherwise puzzling
abandonment of an even simpler argument for incompleteness from
1927.
\end{abstract}


\maketitle

\section{Introduction \label{SEC:intro}}

We explore a distinction among hidden variable models of quantum
theory that has hitherto not been sufficiently emphasized, namely,
whether the quantum state is considered to be ontic or epistemic. We
call a hidden variable model \emph{$\psi$-ontic} if every complete
physical state or \emph{ontic state} \cite{Spekkens_con} in the
theory is consistent with only one pure quantum state; we call it
\emph{$\psi$-epistemic} if there exist ontic states that are
consistent with more than one pure quantum state. In $\psi$-ontic
models, distinct quantum states correspond to disjoint probability
distributions over the space of ontic states, whereas in
$\psi$-epistemic models, there exist distinct quantum states that
correspond to overlapping probability distributions. Only in the
latter case can the quantum state be considered to be truly
epistemic, that is, a representation of an observer's knowledge of
reality rather than reality itself. (This distinction will be
explained in detail further on.)

It is interesting to note that, to the authors' knowledge, all
mathematically explicit hidden variable models proposed to date are
$\psi$-ontic (with the exception of a proposal by Kochen and Specker
\cite{Ks} that only works for a two-dimensional Hilbert space and
which we will discuss further on).\footnote{Subtleties pertaining to
Nelson's mechanics and unconventional takes on the deBroglie-Bohm
interpretation will also be discussed in due course.} The study of
$\psi$-epistemic hidden variable models is the path less traveled in
the hidden variable research program. This is unfortunate given that
recent work has shown how useful the assumption of hidden variables
can be for explaining a variety of quantum phenomena $\emph{if}$ one
adopts a $\psi$-epistemic approach
\cite{Hardydisentangling,toy_theory,BRSLiouville,tr_model}.

It will be useful for us to contrast hidden variable models with the
interpretation that takes the quantum state alone to be a complete
description of reality. We call the latter the
\emph{$\psi$-complete} view, although it is sometimes referred to as
the \emph{orthodox} interpretation\footnote{Note that while Bohr
argued for the completeness of the quantum state, he did so within
the context of an instrumentalist rather than a realist approach and
consequently his view is not the one that we are interested in
examining here. Despite this, the realist $\psi$-complete view we
have in mind does approximate well the views of many researchers
today who identify themselves as proponents of the Copenhagen
interpretation.}.

Arguments against the $\psi$-complete view and in favor of hidden
variables have a long history. Among the most famous are those that
were provided by Einstein. Although he did not use the term `hidden
variable interpretation', it is generally agreed that such an
interpretation captures his approach. Indeed, Einstein had attempted
to construct a hidden variable model of his own (although ultimately
he did not publish this work)
\cite{Howard_eghost,bac_valentini_Epilotwave}. One of the questions
we address in this article is whether Einstein favored either of the
two sorts of hidden variable theories we have outlined above:
$\psi$-ontic or $\psi$-epistemic. Experts in the quantum foundations
community have long recognized that Einstein had already shown a
failure of locality for the $\psi$-complete view with a very simple
argument at the Solvay conference in 1927 \cite{bac_valentini}. It
is also well-known in such circles that a slightly more complicated
argument given in 1935 --- one appearing in his correspondence with
Schr\"{o}dinger, not the Einstein-Podolsky-Rosen paper --- provided
yet another way to see that locality was ruled out for the
$\psi$-complete view\footnote{Borrowing a phrase from Asher Peres
\cite{Peres_noclone}, these facts are ``well known to those who know
things well''.} \cite{Howard_einst_short,FineEcritique}.\ What is
not typically recognized, and which we show explicitly here, is that
the latter argument was actually strong enough to also rule out
locality for $\psi$-ontic hidden variable theories. In other words,
Einstein showed that not only is locality inconsistent with $\psi$
being a complete description of reality, it is also inconsistent
with $\psi$ being ontic, that is, inconsistent with the notion that
$\psi$ represents reality even in an incomplete sense. Einstein thus
provided an argument for the epistemic character of $\psi$ based on
locality.

Fuchs has previously argued in favor of this conclusion. In his
words, ``[Einstein] was the first person to say in absolutely
unambiguous terms why the quantum state should be viewed as
information [...]. His argument was simply that a quantum-state
assignment for a system can be forced to go one way or the other by
interacting with a part of the world that should have no causal
connection with the system of interest.'' \cite{Fuchs_Eepistemic}.
One of the main goals of the present article is to lend further
support to this thesis by clarifying the relevant concepts and by
undertaking a more detailed exploration of Einstein's writings. We
also investigate the implications of our analysis for the history of
incompleteness and nonlocality arguments in quantum theory.

In particular, our analysis helps to shed light on an interesting
puzzle regarding the evolution of Einstein's arguments for
incompleteness.

The argument Einstein gave at the 1927 Solvay conference requires
only a single measurement to be performed, whereas from 1935 onwards
he adopted an argument requiring a measurement to be chosen from two
possibilities. Why did Einstein complicate the argument in this way?
Indeed, as has been noted by many authors, this complication was
actually detrimental to the effectiveness of the argument, given
that most of the criticisms directed against the two-measurement
form of the argument (Bohr's included) focus upon his use of
counterfactual reasoning, an avenue that is not available in the
1927 version \cite{Hardy1995,Redhead,Fine,Maudlin,norsenboxes}.

The notion that Einstein introduced this two-measurement
complication in order to simultaneously beat the uncertainty
principle, though plausible, is not supported by textual evidence.
Although the Einstein-Podolsky Rosen (EPR) paper \emph{does} take
aim at the uncertainty principle, it was written by Podolsky and, by
Einstein's own admission, did not provide an accurate synopsis of
his (Einstein's) views. This has been emphasized by Fine
\cite{FineEcritique} and Howard \cite{Howard_einstnotepr}. In the
versions of the argument that were authored by Einstein, such as
those appearing in his correspondence with Schr\"{o}dinger, the
uncertainty principle is explicitly de-emphasized. Moreover, to the
authors' knowledge, whenever Einstein summarizes his views on
incompleteness in publications or in his correspondence after 1935,
it is the argument appearing in his correspondence with
Schr\"{o}dinger, rather than the EPR argument, to which he appeals.

We suggest a different answer to the puzzle. Einstein consistently
used his more complicated 1935 argument in favor of his simpler 1927
one because the extra complication bought a stronger conclusion,
namely, that the quantum state is not just incomplete, but
\textit{epistemic}. We suggest that Einstein implicitly recognized
this fact, even though he failed to emphasize it adequately.

Finally, our results demonstrate that one doesn't need the
\textquotedblleft big guns\textquotedblright\ of Bell's theorem
\cite{Bell_locality} to rule out locality for any theories in which
$\psi$ is given ontic status; more straightforward arguments
suffice. Bell's argument is only necessary to rule out locality for
$\psi$-epistemic hidden variable theories. It is therefore
surprising that the latter sort of hidden variable theory, despite
being the most difficult to prove inconsistent with locality and
despite being the last, historically, to have been subject to such a
proof, appears to have somehow attracted the least attention, with
Einstein a notable but lonely exception to the rule.

\section{The distinction between $\psi$-ontic and $\psi$-epistemic ontological
models}

\subsection{What is an ontological model? \label{SEC:om_intro}}

We begin by defining some critical notions. First is that of an
ontological model of a theory. Our definition will require that the
theory be formulated operationally, which is to say that the
primitives of description are simply preparation and measurement
procedures -- lists of instructions of what to do in the lab. The
goal of an operational formulation of a theory is simply to
prescribe the probabilities of the outcomes of different
measurements given different preparation procedures, that is, the
probability $p(k|M,P)$ of obtaining outcome $k$ in measurement $M$
given preparation $P.$ For instance, in an operational formulation
of quantum theory, every preparation $P$ is associated with a
density operator $\rho$ on Hilbert space, and every measurement $M$
is associated with a positive operator valued measure (POVM)
$\{E_{k}\}$. \ (In special cases, these may be associated with
vectors in Hilbert space and Hermitian operators respectively.) The
probability of obtaining outcome $k$ is given by the generalized
Born rule, $p(k|M,P)=\mathrm{Tr}(\rho E_{k}).$

In an ontological model of an operational theory, the primitives of
description are the properties of microscopic systems. \ A
preparation procedure is assumed to prepare a system with certain
properties and a measurement procedure is assumed to reveal
something about those properties. \ A complete specification of the
properties of a system is referred to as the \emph{ontic state} of
that system, and is denoted by $\lambda$. The ontic state space is
denoted by $\Lambda$. It is presumed that an observer who knows the
preparation $P$ may nonetheless have incomplete knowledge of
$\lambda.$ \ In other words, the observer may assign a non-sharp
probability distribution $p(\lambda|P)$ over $\Lambda$ when the
preparation is known to be $P.$ Similarly, the model may be such
that the ontic state $\lambda$ determines only the probability
$p(k|\lambda,M)$ of different outcomes $k$ for the measurement $M.$
\ We shall refer to $p(\lambda|P)$ as an \emph{epistemic state},
because it characterizes the observer's knowledge of the system. \
We shall refer to $p(k|\lambda,M),$ considered as a function of
$\lambda,$ as an \emph{indicator function}. For the ontological
model to reproduce the predictions of the operational theory, it
must reproduce the probability of $k$ given $M$ and $P$ through the
formula $\int\mathrm{d}\lambda
{p}(k|M,\lambda)p(\lambda|P)=p(k|M,P).$

An ontological model of quantum theory is therefore defined as
follows.
\begin{definition}
An ontological model of operational quantum theory posits an ontic
state space $\Lambda$ and prescribes a probability distribution over
$\Lambda$ for every preparation procedure $P$, denoted
$p(\lambda|P)$, and a probability distribution over the different
outcomes $k$ of a measurement $M$ for every ontic state
$\lambda\in\Lambda,$ denoted\ $p(k|\lambda,M).$ Finally, for all $P$
and $M,$ it must satisfy,
\begin{equation}
\int\mathrm{d}\lambda{p}(k|M,\lambda)p(\lambda|P)=\mathrm{tr}\left(  \rho
{E}_{k}\right), \label{ont_mod_qm_stats}%
\end{equation}
where $\rho$ is the density operator associated with $P$ and $E_{k}$ is the
POVM element associated with outcome $k$ of $M$.
\end{definition}
The structure of the posited $\Lambda$ encodes the kind of reality
envisaged by the model, while $p(\lambda|P)$ and $p(k|M,\lambda)$
specify what can be known and inferred by observers. Note that we
refer to preparation and measurement procedures rather than quantum
states and POVMs because we wish to allow for the possibility of
contextual\footnote{In a preparation (measurement) contextual
ontological model, different preparation (measurement) procedures
corresponding to the same density operator (POVM) may be assigned
different epistemic states (indicator functions) by the ontological
model.} ontological models \cite{Spekkens_con}.

Note that although the ontological model framework proposed here is
very general, there could exist realist interpretations of quantum
theory that are not suited to it. However, the vast majority of
models analyzed so far seem compatible with it (or a simple
extension to be given in \cite{deficiency}).

\subsection{Classifying ontological models of quantum theory:
heuristics\label{SEC:classes_first}}

An important feature of an ontological model is how it takes the
quantum states describing a system to be related to the ontic states
of that system. The simplest possibility is a one-to-one
relation.\footnote{Note that it is because of such models, wherein
nothing is hidden to one who knows the quantum state, that we adopt
the term ``ontological model'' as opposed to ``hidden variable
model''. Some authors might prefer to use the latter term on the
grounds that a $\psi$-complete model is simply a trivial instance of
a hidden variable model, but we feel that such a terminology would
be confusing.} A schematic of such a model is presented in part~(a)
of Fig.~\ref{FIG:classes}, where we have represented the set of all
quantum states by a one-dimensional ontic state space $\Lambda$
labeled by $\psi.$ We refer to such models as $\psi$\emph{-complete}
because a pure quantum state provides a complete description of
reality. Many might consider this to be the `orthodox'
interpretation.

\begin{figure}[t]
\includegraphics[scale=0.4]{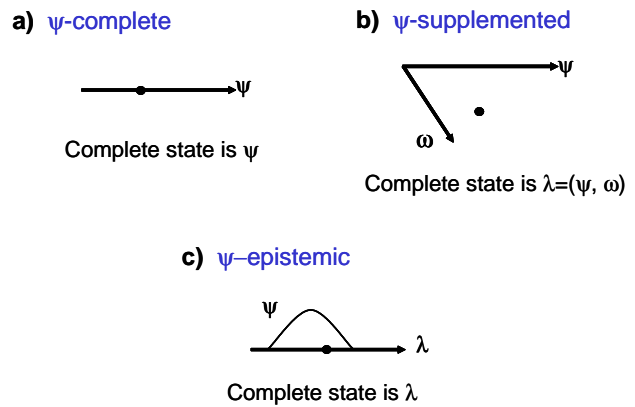}\caption{Schematic view of the ontic state
space for \textbf{(a)} $\psi$-complete models, \textbf{(b)}
$\psi$-supplemented
models and \textbf{(c)} $\psi$-epistemic models.}%
\label{FIG:classes}%
\end{figure}

Of course, the ontological model framework also allows for the
possibility that a complete description of reality may require
\textit{supplementing} the quantum state with additional variables.
Such variables are commonly referred to as `hidden' because their
value is typically presumed to be unknown to someone who knows the
identity of the quantum state. In such models, knowledge of $\psi$
alone provides only an \textit{incomplete} description of reality.

The ontic state space for such a model is schematized in part (b) of
Fig.~\ref{FIG:classes}. Although there may be an arbitrary number of
hidden variables, we indicate only a single hidden variable $\omega$
in our diagram, represented by an additional axis in the ontic state
space $\Lambda.$ Specification of the complete ontic configuration
of a system (a point $\lambda\in\Lambda$) now requires specifying
both $\psi$ and the hidden variable $\omega$. We refer to models
wherein $\psi$ must be supplemented by hidden variables as $\psi
$\emph{-supplemented}. Almost all ontological models of quantum
mechanics constructed to date have fallen into this class. For
example, in the conventional view of the deBroglie-Bohm
interpretation \cite{bohm,bohmsurvey}, the complete ontic state is
given by $\psi$ together with (that is, supplemented by) the
positions of all particles. The ontic nature of\ $\psi$ in the
deBroglie-Bohm interpretation is clear from the fact that it plays
the role of a pilot wave, so that distinct $\psi$s describe
physically distinct universes. Bell's `beable' interpretations
\cite{beables} and modal interpretations of quantum mechanics
\cite{modalrefs1,modalrefs2,modalrefs3,modalrefs4} also take $\psi$
to be a sort of pilot wave and thus constitute $\psi$-supplemented
models \footnote{Note that another way in which to express how
$\psi$-complete and $\psi$-supplemented models differ from
$\psi$-epistemic models is that only in the former is $\psi$ itself
a beable \cite{beables}}. As another example, Belifante's survey of
hidden variable theories \cite{Belifante} considers only
$\psi$-supplemented models.

There is a different way in which $\psi$ could be an incomplete
description of reality: it could represent a state of incomplete
knowledge about reality. In other words, it could be that $\psi$ is
not a variable in the ontic state space at all, but rather encodes a
probability distribution over the ontic state space. In this case
also, specifying $\psi$ does not completely specify the ontic state,
and so it is apt to say that $\psi$ provides an incomplete
description. In such a model, a variation of $\psi$ does not
represent a variation in any physical degrees of freedom, but
instead a variation in the space of possible ways of knowing about
some underlying physical degrees of freedom. This is illustrated
schematically in part (c) of Fig.~\ref{FIG:classes}. We refer to
such models as $\psi$-epistemic.\footnote{There is, however, a
subtlety in ensuring that a probability distribution associated with
$\psi$ is truly epistemic; we address this issue shortly.}

\subsection{Classifying ontological models of quantum theory: a more rigorous
approach\label{SEC:classes_second}}

It will be convenient for our purposes to provide precise
definitions of $\psi$-complete, $\psi$-supplemented, and
$\psi$-epistemic models in terms of the epistemic states that are
associated with different $\psi$. \ In other words, for each model,
we enquire about the probability distribution over the ontic state
space that is assigned by an observer who knows that the preparation
procedure is associated with the quantum state $\psi.$ Despite
appearances, this does not involve any loss of generality. \ For
instance, although it might appear that $\psi$-complete models can
only be defined by their ontological claims, namely, that pure
quantum states are associated one-to-one with ontic states, such
claims can always be re-phrased as epistemic claims, in this case,
that knowing the quantum state to be $\psi$ implies having a state
of complete knowledge about the ontic state.

We now provide precise definitions of two distinctions among
ontological models from which one can extract the three categories
introduced in Sec.~\ref{SEC:classes_first}. The first distinction is
between models that are $\psi$-complete and those that are not.

\begin{definition}
\strut An ontological model is \textbf{$\psi$-complete} \strut if
the ontic state space $\Lambda$ is isomorphic to the projective
Hilbert space $\mathcal{PH}$ (the space of rays of Hilbert space)
and if every preparation procedure $P_{\psi}$ associated in quantum
theory with a given ray $\psi$ is associated in the ontological
model with a Dirac delta function centered at the ontic state
$\lambda_{\psi}$ that is isomorphic to $\psi$, ${p}(\lambda
|P_{\psi})=\delta(\lambda-\lambda_{\psi}).\footnote{The Dirac delta
function on $\Lambda$ is defined by
$\int_{\Lambda}\delta(\lambda-\lambda_{\psi
})f(\lambda)\mathrm{d}\lambda=f(\lambda_{\psi}).$}$
\label{DEF:psi_complete}
\end{definition}

Hence, in such models, the only feature of the preparation that is
important is the pure quantum state to which it is associated.
Epistemic states for a pair of preparations associated with distinct
quantum states are illustrated schematically in part (a) of
Fig.~\ref{FIG:classes2}.\footnote{In the case of a mixture of pure
states, one uses the associated mixture of epistemic states. For
instance, if the preparation is of a pure state $\psi_{i}$ with
probability $w_{i},$ then the epistemic state is
$\sum_{i}w_{i}p(\lambda|\psi _{i})$. Note, however, that it is not
at all clear how to deal in a $\psi$-complete model with
\emph{improper} mixtures, that is, mixed density operators that
arise as the reduced density operator of an entangled state. \ This
fact is often used to criticize such models.}

\begin{definition}
\strut If an ontological model is not $\psi$-complete, then it is said to be
\textbf{$\psi$-incomplete. }\label{DEF:psi_incomplete}
\end{definition}

Identifying a model as $\psi $-incomplete does not specify how such
a failure is actually manifested. It might be that $\Lambda$ is
parameterized by $\psi$ \textit{and} by supplementary variables, or
it could alternatively be that the quantum state does not
parameterize the ontic states of the model at all. In order to be
able to distinguish these two possible manifestations of $\psi
$-incompleteness, we introduce a second dichotomic classification of
ontological models.

\begin{definition}
An ontological model is \textbf{$\psi$-ontic} if for any pair of
preparation procedures, $P_{\psi}$ and ${P}_{\phi}$, associated with
distinct quantum states $\psi$ and $\phi$, we have
$p(\lambda|P_{\psi})p(\lambda |P_{\phi})=0$ for all
$\lambda$.\footnote{Note that a better definition of the distinction
requires that $\int_\Lambda d\lambda
\sqrt{p(\lambda|P_{\psi})}\sqrt{p(\lambda |P_{\phi})}=0$. This
definition demands the vanishing of the \textit{classical fidelity},
rather than the product, of the probability distributions associated
with any pair of distinct pure quantum states. This refinement is
important for dealing with ontological models wherein the only pairs
of distributions that overlap do so on a set of measure zero.
Intuitively, one would not want to classify these as
$\psi$-epistemic, but only the fidelity-based definition does
justice to this intuition. This definition will not, however, be
needed here.\label{FNOTE:fidelity}} \label{DEF:psi_ontic}
\end{definition}

Hence, the epistemic states associated with distinct quantum states
are completely non-overlapping in a $\psi$-ontic model. In other
words, different quantum states pick out disjoint regions of
$\Lambda.$ The idea of a $\psi$-incomplete model that is also
$\psi$-ontic is illustrated schematically in part (b) of
Fig.~\ref{FIG:classes2}. Here, the ontic state space is
parameterized by $\psi$ (represented by a single axis) and a
supplementary hidden variable $\omega.$ \ The epistemic state
$p(\lambda|P_{\phi})$ representing a preparation procedure
associated with $\phi$ has the form of a Dirac delta function along
the $\psi$ axis, which guarantees the disjointness property for
epistemic states associated with distinct quantum states. Even if an
ontological model is presented to us in a form where it is not
obvious whether $\psi$ parameterizes $\Lambda$, by verifying that
the above definition is satisfied, one verifies that such a
parametrization can be found.

Another useful way of thinking about $\psi$-ontic models is that the
ontic state $\lambda$ `encodes' the quantum state $\psi$ because a
given $\lambda$ is only consistent with one choice of $\psi$.
Alternatively, we can see this encoding property as follows. By
Bayes' theorem, one infers that any $\psi$-ontic model satisfies
$p(P_{\psi}|\lambda)p(P_{\phi}|\lambda)=0$ for $\psi\neq\phi,$ which
implies that for every $\lambda,$ there exists some $\psi$ such that
$p(P_{\psi}|\lambda)=1$ and $p(P_{\phi}|\lambda)=0$ for all $\phi
\neq\psi.$

\begin{definition}
If an ontological model fails to be $\psi$-ontic, then it is said to
be \textbf{$\psi$-epistemic}.\label{DEF:psi_epistemic}
\end{definition}

It is worth spelling out what the failure of the $\psi$-ontic
property entails: there exists a pair of preparation procedures,
$P_{\psi}$ and ${P}_{\phi}$ and a $\lambda\in\Lambda$ such that
$p\left( \lambda|P_{\psi }\right) p\left( \lambda|P_{\phi}\right)
\neq{0}$, which is to say that the two epistemic states \emph{do
}overlap. Using Bayes' theorem we can equivalently formulate this
requirement as $\exists\:{P}_{\psi},{P}_{\phi },\lambda:p\left(
P_{\psi}|\lambda\right)  p\left(  P_{\phi}|\lambda\right) \neq{0},$
which asserts that the ontic state $\lambda$ is consistent with both
the quantum state $\psi$ and the quantum state $\phi.$ \ In a
$\psi$-epistemic model, multiple distinct quantum states are
consistent with the same state of reality -- the ontic state
$\lambda$ does not encode $\psi.$ It is in this sense that the
quantum state is judged epistemic in such models. This is
illustrated schematically in part (c) of Fig.~\ref{FIG:classes2},
where the ontic state space $\Lambda$ is one-dimensional, and
preparations associated with distinct $\psi$ are associated with
overlapping distributions on $\Lambda.$

Some comments are in order. The reader might well be wondering why
we do not admit that \emph{any} $\psi$-incomplete model is
`epistemic', simply because it associates a probability distribution
of nontrivial width over $\Lambda$ with each quantum state. We admit
that although it might be apt to say that $\psi$-incomplete
\textit{models} have an epistemic character, the question of
interest here is whether \emph{pure quantum states} have an
epistemic character. It is for this reason that we speak of whether
a model is `$\psi $-epistemic' rather than simply `epistemic'. By
our definitions, $\psi$ has an ontic character if and only if a
variation of $\psi$ implies a variation of reality and an epistemic
character if and only if a variation of $\psi$ does \emph{not}
necessarily imply a variation of reality.

\begin{figure}[t]
\includegraphics[scale=0.4]{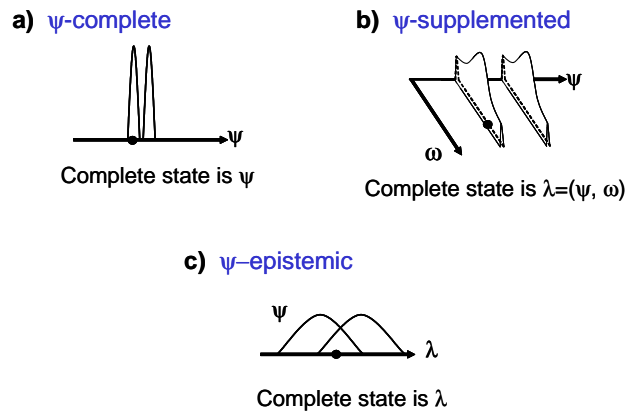}\caption{Schematic representation of how
probability distributions associated with $\psi$ are related in
\textbf{(a)} $\psi$-complete models, \textbf{(b)}
$\psi$-supplemented models and \textbf{(c)} $\psi$-epistemic models.
Note that the narrow gaussian shaped distributions in part
\textbf{(b)} denote an arbitrary distribution over the supplementary
variables $\omega$ combined with a dirac-delta function over the set
of
quantum states, $\psi$.}%
\label{FIG:classes2}%
\end{figure}

For any model we can specify a $\psi$-complete versus
$\psi$-incomplete and $\psi$-ontic versus $\psi$-epistemic
classification. At first sight, this suggests that there will be
four different types of ontological model. This impression is
mistaken however; there are only three different types of model
because one of the four combinations describes an empty set.
Specifically, if a model is $\psi$-complete, then it is also
$\psi$-ontic. This follows from the fact that if a model is $\psi
$-complete, then $p\left(  \lambda|P_{\psi}\right) =\delta\left(
\lambda-\lambda_{\psi}\right),$ where $\lambda_{\psi}$ is the ontic
state isomorphic to $\psi,$ and from the fact that $\delta\left(
\lambda -\lambda_{\psi}\right) \delta\left(
\lambda-\lambda_{\phi}\right)  =0$ for $\psi\neq\phi$.\

The contrapositive of this implication asserts that for the quantum
state to have an epistemic character, it cannot be a complete
description of reality. We have therefore proven:

\begin{lemma}
The following implications between properties of ontological models
hold\footnote{Implications such as $C_{1}\rightarrow{C}_{2}$ between
two classes $C_{1}$ and $C_{2}$ of ontological models should be read
as `any model in class $C_{1}$ is necessarily also in class
$C_{2}$'.}:
\[
\psi\text{-complete}\rightarrow\psi\text{-ontic,}
\]
and its negation,
\begin{equation}
\psi\text{-epistemic}\rightarrow\psi\text{-incomplete.}
\end{equation}
\label{LEM:psi_ontic}
\end{lemma}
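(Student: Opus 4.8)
The plan is to prove the single implication $\psi$-complete $\rightarrow$ $\psi$-ontic directly from the definitions, since the second stated implication (\,$\psi$-epistemic $\rightarrow$ $\psi$-incomplete\,) is merely its contrapositive and so follows immediately once the first is established. Everything needed has already been set up in the excerpt: Definition~\ref{DEF:psi_complete} tells us exactly what it means for a model to be $\psi$-complete, Definition~\ref{DEF:psi_ontic} tells us what $\psi$-ontic means, and the footnote to Definition~\ref{DEF:psi_complete} records the defining property of the Dirac delta on $\Lambda$. In fact the body text immediately preceding the lemma already carries out the argument; the proof is essentially a matter of assembling those observations into a self-contained paragraph.

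Concretely, I would proceed as follows. First, assume the model is $\psi$-complete. By Definition~\ref{DEF:psi_complete}, $\Lambda \cong \mathcal{PH}$ and for every ray $\psi$ the preparation $P_\psi$ is represented by $p(\lambda|P_\psi) = \delta(\lambda - \lambda_\psi)$, where $\lambda_\psi \in \Lambda$ is the ontic state isomorphic to $\psi$. Second, take any pair of distinct quantum states $\psi \neq \phi$; since the isomorphism $\psi \mapsto \lambda_\psi$ is a bijection, $\lambda_\psi \neq \lambda_\phi$. Third, observe that the product $\delta(\lambda - \lambda_\psi)\,\delta(\lambda - \lambda_\phi)$ vanishes for every $\lambda$: the two delta functions are supported on the distinct points $\lambda_\psi$ and $\lambda_\phi$, so their supports are disjoint and the pointwise product is identically zero. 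Hence $p(\lambda|P_\psi)\,p(\lambda|P_\phi) = 0$ for all $\lambda$, which is precisely the condition in Definition~\ref{DEF:psi_ontic} for the model to be $\psi$-ontic. This establishes $\psi$-complete $\rightarrow$ $\psi$-ontic.

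For the second line, I would simply note that $\psi$-epistemic is defined (Definition~\ref{DEF:psi_epistemic}) as the negation of $\psi$-ontic, and $\psi$-incomplete is defined (Definition~\ref{DEF:psi_incomplete}) as the negation of $\psi$-complete, so the implication $\psi$-epistemic $\rightarrow$ $\psi$-incomplete is the contrapositive of what was just proved and requires no further work.

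There is no real obstacle here; if anything, the only delicate point worth a remark is the handling of products of Dirac delta functions, which are distributions rather than genuine functions. One should read "$\delta(\lambda-\lambda_\psi)\,\delta(\lambda-\lambda_\phi) = 0$ for $\psi \neq \phi$" in the natural sense that the distributions have disjoint supports — or, equivalently, adopt the fidelity-based reformulation mentioned in footnote~\ref{FNOTE:fidelity}, under which $\int_\Lambda d\lambda\,\sqrt{p(\lambda|P_\psi)}\sqrt{p(\lambda|P_\phi)} = 0$ since no $\lambda$ lies in both supports. Either reading makes the argument rigorous, and since the excerpt explicitly defers the subtler fidelity-based treatment, the support-disjointness reading suffices for the lemma as stated.
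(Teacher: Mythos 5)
Your proof is correct and follows essentially the same route as the paper: the authors likewise note that $\psi$-completeness forces $p(\lambda|P_\psi)=\delta(\lambda-\lambda_\psi)$ with $\delta(\lambda-\lambda_\psi)\delta(\lambda-\lambda_\phi)=0$ for $\psi\neq\phi$, and obtain the second implication as the contrapositive. Your added remark on interpreting the product of delta distributions via disjoint supports is a reasonable clarification of a point the paper leaves implicit.
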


So it is impossible for a model to be both $\psi$-complete and $\psi
$-epistemic. Given Lemma \ref{LEM:psi_ontic}, we can unambiguously
refer to models that are $\psi$-complete and $\psi$-ontic as simply
$\psi$-complete, and models that are $\psi $-incomplete and
$\psi$-epistemic as simply $\psi$-epistemic. The $\psi
$-supplemented models constitute the third category.

\begin{definition}
Ontological models that are $\psi$-incomplete and $\psi$-ontic will be
referred to as \textbf{$\psi$-supplemented}.\label{DEF:psi_ontic_plus}
\end{definition}

The classification of ontological models is summarized in
Fig.~\ref{FIG:classes2}.

\begin{figure}[t]
\includegraphics[scale=0.5]{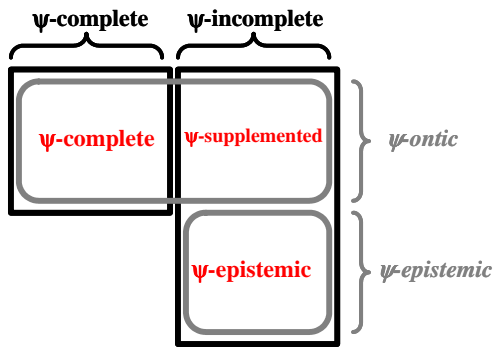}\caption{Two distinctions and the three
classes of ontological model that they define.}
\label{FIG:classes_chart}
\end{figure}

\subsection{Examples \label{SEC:example_models}}

We now provide examples from the literature of models that fall into
each class.

\subsubsection{The Beltrametti-Bugajski model \label{SEC:example_bb}}

The model of Beltrametti and Bugajski \cite{beltrametti_bugajski} is
essentially a thorough rendering of what most would refer to as an
orthodox interpretation of quantum mechanics.\footnote{Note,
however, that there are several versions of orthodoxy that differ in
their manner of treating measurements. The Beltrametti-Bugajski
model is distinguished by the fact that it fits within the framework
for ontological models we have outlined.} The ontic state space
postulated by the model is precisely the projective Hilbert space,
$\Lambda=\mathcal{PH}$, so that a system prepared in a quantum state
$\psi$ is associated with a sharp probability
distribution\footnote{Preparations which correspond to mixed quantum
states can be constructed as a convex sum of such sharp
distributions} over $\Lambda$,
\begin{equation}
p\left(  \lambda|\psi\right)  =\delta\left(  \lambda-\psi\right),
\end{equation}
where we are using $\psi$ interchangeably to label the Hilbert space
vector and to denote the ray spanned by this vector

The model posits that the different possible states of reality are
simply the different possible quantum states. \ It is therefore
$\psi$-complete by Definition \ref{DEF:psi_complete}. \ It remains
only to demonstrate how it reproduces the quantum statistics.

This is achieved by assuming that the probability of obtaining an
outcome $k$ of a measurement procedure $M$ depends
indeterministically on the system's ontic state $\lambda$ as
\begin{equation}
p\left(  k|M,\lambda\right)  =\text{tr}\left(  |\lambda\rangle\langle
\lambda|E_{k}\right),\label{BB1}%
\end{equation}
where $\left\vert \lambda\right\rangle \in\mathcal{H}$ denotes the
quantum state associated with $\lambda\in\mathcal{PH}$, and where
$\left\{ E_{k}\right\}$ is the POVM that quantum mechanics
associates with $M$. It follows that,
\begin{align}
\mathrm{Pr}\left(  k|M,\psi\right)   &  =\int_{\Lambda}{d}\lambda\text{ }%
{p}\left(  k|M,\lambda\right)  \text{ }p(\lambda|\psi)\nonumber\\
&  =\int_{\Lambda}{d}\lambda\text{ tr}\left(  |\lambda\rangle\langle
\lambda|E_{k}\right)  \text{ }\delta\left(  \lambda-\lambda_{\psi}\right)
\label{BB2}\\
&  =\text{tr}\left(  |\psi\rangle\langle\psi|E_{k}\right),\label{BB3}%
\end{align}
and so the quantum statistics are trivially reproduced.

If we restrict consideration to a system with a two dimensional
Hilbert space then $\Lambda$ is isomorphic to the Bloch sphere, so
that the ontic states are parameterized by the Bloch vectors of unit
length, which we denote by $\vec{\lambda}.$ \ The Bloch vector
associated with the Hilbert space ray $\psi$ is denoted $\vec{\psi}$
and is defined by $\left\vert \psi\right\rangle \left\langle
\psi\right\vert
=\frac{1}{2}I+\frac{1}{2}\vec{\psi}\cdot\vec{\sigma}$ where
$\vec{\sigma }=(\sigma_{x},\sigma_{y},\sigma_{z})$ denotes the
vector of Pauli matrices and $I$ denotes the identity operator.

If we furthermore consider $M$ to be a \textit{projective}
measurement, then it is associated with a projector-valued measure
$\{\left\vert \phi \right\rangle \left\langle \phi\right\vert
,\left\vert \phi^{\perp }\right\rangle \left\langle
\phi^{\perp}\right\vert \}$ or equivalently, an orthonormal basis
$\{\left\vert \phi\right\rangle ,\left\vert \phi^{\perp
}\right\rangle \}.$ \ It is convenient to denote the probability of
getting the $\phi$ outcome given ontic state $\vec{\lambda}$ simply
by $p(\phi |\vec{\lambda})$. \ Eq.~(\ref{BB1}) simplifies to,
\begin{align}
p(  \phi|\vec{\lambda})    & =|\left\langle \phi|\lambda
\right\rangle |^{2}\\
& =\frac{1}{2}\left(  1+\vec{\phi}\cdot\vec{\lambda}\right)
\label{BBif}.
\end{align}
The epistemic states and indicator functions for this case of the
Beltrametti-Bugajski model are illustrated schematically in
Fig.~\ref{FIG:bbmodel}.

\begin{figure}[t]
\includegraphics[scale=0.6]{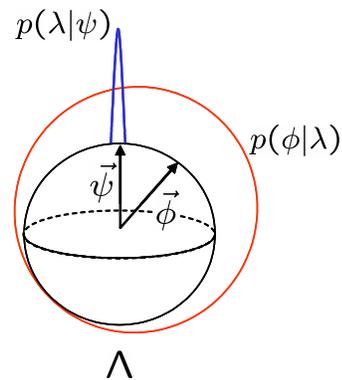}\caption{Illustration of the epistemic
states and indicator functions in the Beltrametti-Bugajski model.}
\label{FIG:bbmodel}
\end{figure}

\subsubsection{The Bell-Mermin model \label{SEC:example_bellmermin}}

We now present an ontological model for a two dimensional Hilbert
space that is originally due to Bell \cite{Bell_probhv} and was
later adapted into a more intuitive form by Mermin
\cite{Mermin_bell}. \

The model employs an ontic state space $\Lambda$ that is a Cartesian
product of a pair of state spaces,
$\Lambda=\Lambda^{\prime}\times\Lambda ^{\prime\prime}$. Each of
$\Lambda^{\prime}$ and $\Lambda^{\prime\prime}$ is isomorphic to the
unit sphere. It follows that there are two variables required to
specify the systems total ontic state,
$\vec{\lambda}^{\prime}\in\Lambda^{\prime}$ and
$\vec{\lambda}^{\prime\prime}\in\Lambda^{\prime\prime}$. A system
prepared according to quantum state $\psi$ is assumed to be
described by a product
distribution on $\Lambda^{\prime}\times\Lambda^{\prime\prime}$,%

\begin{equation}
p(\vec{\lambda}^{\prime},\vec{\lambda}^{\prime\prime}|\psi)=p(\vec{\lambda
}^{\prime}|\psi)p(\vec{\lambda}^{\prime\prime}|\psi).
\end{equation}
The distribution over $\vec{\lambda}^{\prime}$ is a Dirac delta
function centered on $\vec{\psi},$ that is,
$p(\vec{\lambda}^{\prime}|\psi)=\delta(\vec{\lambda}^{\prime}-\vec{\psi})$.
The distribution over
$\vec{\lambda}^{\prime\prime}\in\Lambda^{\prime\prime}$ is uniform
over the unit sphere,
$p(\vec{\lambda}^{\prime\prime}|\psi)=\frac{1}{4\pi}$, independent
of $\psi $. These epistemic states are illustrated in
Fig.~\ref{FIG:bmbs1and2}. Consequently,

\begin{equation}
p(\vec{\lambda}^{\prime},\vec{\lambda}^{\prime\prime}|\psi)=\frac{1}{4\pi
}\delta(\vec{\lambda}^{\prime}-\vec{\psi}).
\end{equation}

Suppose now that we wish to perform a projective measurement
associated with the basis $\{\left\vert \phi\right\rangle
,\left\vert \phi^{\perp }\right\rangle \}$. The Bell-Mermin model
posits that the $\phi$ outcome will occur if and only if the vector
$\vec{\lambda}^{\prime}+\vec{\lambda }^{\prime\prime}$ has a
positive inner product with the Bloch vector $\vec{\phi}.$ \ This
measurement is therefore associated with the indicator function,
\begin{equation}
p(\phi|\vec{\lambda}^{\prime},\vec{\lambda}^{\prime\prime})=\Theta(\vec{\phi
}\cdot(\vec{\lambda}^{\prime}+\vec{\lambda}^{\prime\prime})),
\label{BMif}
\end{equation}

where $\Theta$ is the Heaviside step function defined by
\begin{align*}
\Theta(x)  & =1\text{ if }x>0\\
& =0\text{ if }x\le 0.
\end{align*}

The Bell-Mermin model's predictions for $p(\phi|\psi)$ (calculated
as the overlap of the epistemic distributions from
Fig.~\ref{FIG:bmbs1and2} with the indicator function defined in
(\ref{BMif})) successfully reproduce the quantum mechanical Born
rule,
\begin{align}
p(\phi|\psi)  &
=\frac{1}{4\pi}\int\!\!\!\int{d}\Lambda^{\prime}{d}\Lambda
^{\prime\prime}\:\delta(\vec{\lambda}^{\prime}-\vec{\lambda}^{\prime}_{\psi
})\:\Theta(\vec{\lambda}^{\prime}_{\phi}\cdot(\vec{\lambda}^{\prime}%
+\vec{\lambda}^{\prime\prime}))\nonumber\\
&  =\frac{1}{2}\left(  1+\vec{\lambda}^{\prime}_{\phi}\cdot\vec{\lambda
}^{\prime}_{\psi}\right) \nonumber\\
&  =\left|  \langle\psi|\phi\rangle\right|  ^{2}.
\end{align}
We can see immediately that the Bell-Mermin model is
$\psi$-incomplete because
$\Lambda=\Lambda^{\prime}\times\Lambda^{\prime\prime}\neq\mathcal{PH}$.
Furthermore,
\begin{align}
p(\lambda|\psi)p(\lambda|\phi) &  =p(\vec{\lambda}^{\prime},\vec{\lambda
}^{\prime\prime}|\psi)p(\vec{\lambda}^{\prime},\vec{\lambda}^{\prime\prime
}|\phi)\nonumber\\
&  =\frac{1}{16\pi^{2}}\delta(\vec{\lambda}^{\prime}-\vec{\psi})\delta
(\vec{\lambda}^{\prime}-\vec{\phi})\nonumber\\
&  =0\:\:\:\text{if}\:\:\:\psi\neq\phi,
\end{align}
implying that the Bell-Mermin model is $\psi$-ontic. Recalling Definition
\ref{DEF:psi_ontic_plus} we conclude that this model falls into the class
$\psi$-supplemented.

\begin{figure}[t]
\includegraphics[scale=0.55]{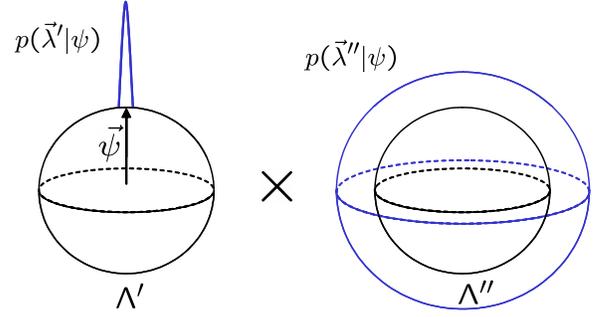}\caption{Illustration of the epistemic
states in the Bell-Mermin model.} \label{FIG:bmbs1and2}
\end{figure}

Because the ontic state space of this model is four dimensional, it
is difficult to illustrate it in a figure. We can present the
distributions over $\vec{\lambda}^{\prime}$ and
$\vec{\lambda}^{\prime\prime}$ on separate unit spheres, as in
Fig.~\ref{FIG:bmbs1and2} , but the indicator functions cannot be
presented in this way.

\subsubsection{The Kochen-Specker model \label{SEC:example_ksmodel}}

As our final example, we consider a model for a two-dimensional
Hilbert space due to Kochen and Specker \cite{Ks}. The ontic state
space $\Lambda$ is taken to be the unit sphere, and a quantum state
$\psi$ is associated with the probability distribution,

\begin{equation}
p(\lambda|\psi)=\frac{1}{\pi}\Theta(\vec{\psi}\cdot\vec{\lambda})\:\vec{\psi
}\cdot\vec{\lambda}, \label{mu_kochen_specker}
\end{equation}
where $\vec{\psi}$ is the Bloch vector corresponding to the quantum
state $\psi$. It assigns the value $\cos{\theta}$ to all points an
angle $\theta<\frac{\pi}{2}$ from $\psi $, and the value zero to
points with $\theta>\frac{\pi}{2}$. This is illustrated in
Fig.~\ref{FIG:ksmodel}.

Upon implementing a measurement procedure $M$ associated with a
projector $|\phi\rangle\langle\phi|$ a positive outcome will occur
if the ontic state $\vec{\lambda}$ of the system lies in the
hemisphere centered on $\vec{\phi}$, i.e.,
\begin{equation}
p(\phi|\lambda)=\Theta(\vec{\phi}\cdot\vec{\lambda}).
\end{equation}
It can be checked that the overlaps of $p(\lambda|\psi)$ and
$p(\phi|\lambda)$ then reproduce the required quantum statistics,
\begin{align}
p(\phi|\psi)  & =\int{d}\lambda\:\frac{1}{\pi}\Theta(\vec{\psi}\cdot
\vec{\lambda})\Theta(\vec{\phi}\cdot\vec{\lambda})\:\vec{\psi}\cdot
\vec{\lambda}\nonumber\\
&  =\frac{1}{2}(1+\vec{\psi}\cdot\vec{\phi})\nonumber\\
&  =\left|  \langle\psi|\phi\rangle\right|  ^{2}.
\end{align}

Referring to Definition \ref{DEF:psi_incomplete} we see that this
model is $\psi$-incomplete, since although $\Lambda$ is isomorphic
to the system's projective Hilbert space,
Eq.~(\ref{mu_kochen_specker}) implies that the model associates
\textit{non-sharp} distributions with quantum states. Furthermore,
\[
p(\lambda|\psi)p(\lambda|\phi)=\frac{1}{\pi^{2}}\Theta(\vec{\psi}\cdot
\vec{\lambda})\;\Theta(\vec{\phi}\cdot\vec{\lambda})\;\vec{\psi}\cdot
\vec{\lambda}\;\vec{\phi}\cdot\vec{\lambda},
\]
is nonzero for nonorthogonal $\phi$ and $\psi,$ showing, via
Definition \ref{DEF:psi_epistemic}, that the Kochen-Specker model is
$\psi$-epistemic.

\begin{figure}[t]
\includegraphics[scale=0.6]{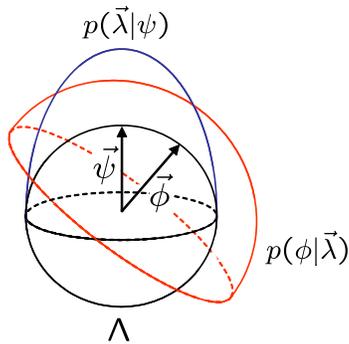}\caption{Illustration of the epistemic
states and indicator functions of the Kochen-Specker model.}
\label{FIG:ksmodel}
\end{figure}

\subsubsection{Connections between the models}

It is not too difficult to see that the Bell-Mermin model is simply
the Beltrametti-Bugajski model supplemented by a hidden variable
that uniquely determines the outcomes of all projective
measurements. \ We need only note that within the Bell-Mermin model,
the probability of obtaining the measurement outcome $\phi$ given
$\vec{\lambda}^{\prime}$ (i.e. not conditioning on the supplementary
hidden variable $\vec{\lambda}^{\prime\prime }$) is,
\begin{align*}
p(\phi|\vec{\lambda}^{\prime})  &
=\int_{\Lambda}d\vec{\lambda}^{\prime\prime
}p(\phi|\vec{\lambda}^{\prime},\vec{\lambda}^{\prime\prime})p(\vec{\lambda
}^{\prime\prime})\\
& =\int_{\Lambda}d\vec{\lambda}^{\prime\prime}\Theta(\vec{\phi}\cdot
(\vec{\lambda}^{\prime}+\vec{\lambda}^{\prime\prime}))\frac{1}{4\pi}\\
& =\frac{1}{2}\left(  1+\vec{\phi}\cdot\vec{\lambda}^{\prime}\right)
,
\end{align*}
which is precisely the indicator function of the
Beltrametti-Bugajski model, Eq.~(\ref{BBif}). So, whereas in the
Beltrametti-Bugajski model, outcomes that are not determined
uniquely by $\vec{\lambda}^{\prime}$ (i.e. for which $0<p(\phi
|\vec{\lambda}^{\prime})<1$) are deemed to be objectively
indeterministic, in the Bell-Mermin model this indeterminism is
presumed to be merely epistemic, resulting from ignorance of the
value of the supplementary hidden variable $\vec{\lambda
}^{\prime\prime}.$ Note that although the Bell-Mermin model
eliminates the objective indeterminism of the Beltrametti-Bugajski
model, it pays a price in ontological economy -- the dimensionality
of the ontic state space is doubled.

Furthermore, there is a strong connection, previously unnoticed,
between the Bell-Mermin model and the Kochen-Specker model. Although
the ontic state is specified by two variables, $\vec{\lambda}'$ and
$\vec{\lambda}''$, in the Bell-Mermin model, the indicator functions
for projective measurements, presented in Eq.~(\ref{BMif}), depend
only on $\vec{\lambda}^{\prime}+\vec{\lambda }^{\prime\prime}$. It
follows that if one re-parameterizes the ontic state space by the
pair of vectors
$\vec{u}=\vec{\lambda}^{\prime}+\vec{\lambda}^{\prime\prime}$ and
$\vec{v}=\vec{\lambda}^{\prime}-\vec{\lambda}^{\prime\prime}$, then
the indicator functions depend only on $\vec{u}$. Consequently, the
only aspect of the epistemic state that is significant for
calculating operational predictions is the marginal
$p(\vec{u}|\psi)$. This is calculated to be,
\begin{align}
p(\vec{u}|\psi)  & =\int d\vec{v}\text{ }p(\vec{u},\vec{v}|\psi)\\
\nonumber & =\int
d\vec{v}\frac{1}{4\pi}\delta(\frac{1}{2}\vec{u}+\frac{1}{2}\vec
{v}-\vec{\psi})\\ \nonumber
&=\frac{2}{\pi}\Theta(\vec{\psi}\cdot\vec{u})\text{
}\vec{\psi}\cdot\vec{u}.
\end{align}
But, on normalizing the vector $\vec{u}$ to lie on the unit sphere,
this is precisely the form of the epistemic state posited by the
Kochen-Specker model, Eq.~(\ref{mu_kochen_specker}), with $\vec{u}$
substituted for $\vec{\lambda}$.

It follows that the Kochen-Specker model is simply the Bell-Mermin
model with the variable $\vec{v}$ eliminated, so that the variable
$\vec{u}$ completely specifies the ontic state. (Reducing the ontic
state space in this way leaves the empirical predictions of the
model intact because these did not depend on $\vec{v}$.)

A methodological principle that is often adopted in the construction
of physical theories is that one should not posit unnecessary
ontological structure. Appealing to Occam's razor in the present
context would lead one naturally to judge the variable $\vec{v}$ to
be un-physical, akin to a gauge degree of freedom, and to thereby
favor the minimalist ontological structure posited by the
Kochen-Specker model over that of the Bell-Mermin model.

We see, therefore, that the price in ontological overhead that was
paid by the Bell-Mermin model to eliminate objective indeterminism
from the Beltrametti-Bugajski model did not need to be paid. The
Kochen-Specker model renders the indeterminism epistemic without any
increase in the size of the ontic state space.

It is interesting to note that starting from the orthodox model of
Beltrametti and Bugajski for two dimensional Hilbert spaces, if one
successively enforces (1) a principle that any indeterminism must be
epistemic rather than objective, and (2) a principle that any
gauge-like degrees of freedom must be eliminated as un-physical, one
arrives at the $\psi$-epistemic model of Kochen and Specker. One is
led to wonder whether such a procedure might be applied to
ontological models of quantum theory in higher dimensional Hilbert
spaces.

This concludes our discussion of the classification scheme for
ontological models. We now turn our attention to the question of how
these classes fare on the issue of locality.

\begin{figure}[t]
\includegraphics[scale=0.6]{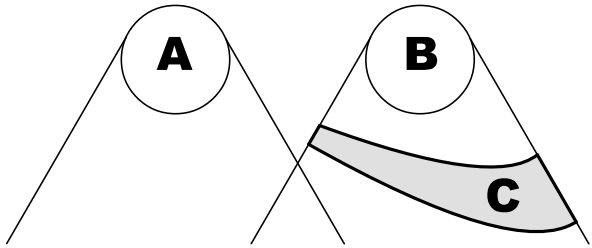}\caption{Space-time regions used in the definition of local causality proposed by Bell \cite{Bell_cuisine}.}
\label{FIG:bell_fig}
\end{figure}

\section{Locality in ontological models \label{SEC:locality}}

A necessary component of any sensible notion of locality is
separability, which we define as follows.
\begin{definition}
Suppose a region $R$ can be divided into local regions $R_{1},R_{2},...,R_{n}%
$. An ontological model is said to be \emph{separable} (denoted $S$) only if
the ontic state space $\Lambda_{R}$ of region $R$ is the Cartesian product of
the ontic state spaces $\Lambda_{R_{i}}$ of the regions $R_{i},$%
\[
\Lambda_{R}=\Lambda_{R_{1}}\times\Lambda_{R_{1}}\times\dots\times
\Lambda_{R_{n}}.
\]
\label{DEF:separable}
\end{definition}
The assumption of separability is made, for instance, by Bell when
he restricts his attention to theories of \textit{local} beables.
These are variables parameterizing the ontic state space ``which
(unlike for example the total energy) can be assigned to some
bounded space-time region'' \cite{bell_quoteonseparability}.

Separability is generally not considered to be a sufficient
condition for locality. An additional notion of locality, famously
made precise by Bell \cite{Bell_cuisine,bell_beables}, appeals to
the causal structure of relativistic theories.  The definition
appeals to the space-time regions defined in Fig.
\ref{FIG:bell_fig}.  Regions $A$ and $B$ are presumed to be
space-like separated.
\begin{definition}
A separable ontological model is \emph{locally causal} (LC) if and
only if the probabilities of events in space-time region $B$ are
unaltered by specification of events in space-time region $A$, when
one is already given a complete specification of the events in a
space-time region $C$ that screens off $B$ from the intersection of
the backward light cones of $A$ and $B$.
\label{DEF:locally_causal}
\end{definition}
Local causality can be expressed as
\begin{equation}
p(B|A,\lambda_{C})=p(B|\lambda_{C}),
\end{equation}
where $B$ is a proposition about events occurring in region $B,$
$\lambda _{C}$ is the ontic state of space-time region C (recalling
that the ontic state of a system is a complete specification of the
properties of that system), and $A$ is a proposition about events in
region $A$.

Finally, we define locality to be the conjunction of these two
notions.
\begin{definition}
An ontological model is \emph{local} (L) if and only if it is
separable and locally causal. \label{DEF:bell_local}
\end{definition}

Given that the Hilbert space associated with a pair of distinct
regions of space (or a pair of systems confined to distinct regions)
is the tensor product of the Hilbert spaces associated with each
region (or system), rather than the Cartesian product, we deduce
directly from Definitions \ref{DEF:psi_complete} and
\ref{DEF:separable} that $\psi$-complete models are not separable,
and consequently not local,\footnote{Some might argue that the ontic
state space of a system should include the mixed quantum states.
However, even if the ontic state space of a system were taken to be
the convex hull of the projective Hilbert space for that system, the
condition of separability would still not be satisfied because the
Cartesian product of the ontic state spaces of two systems would not
contain any correlated quantum states.}
\begin{equation}
\psi\text{-complete}\implies\lnot\text{S}\implies\lnot\text{L}.
\end{equation}
So, one needn't even test whether $\psi$-complete models are locally
causal, given that they fail to even exhibit separability, which is
a prerequisite to making sense of the notion of local causality.

There is in fact good evidence that this kind of reasoning captures
Einstein's earliest misgivings about quantum theory. Already in
1926, Einstein judges Schr\"{o}dinger's wave mechanics to be
``altogether too primitive'' \cite{Einstein_ehrenfest}. Howard has
argued convincingly that the significant issue for Einstein, even in
those early days, was separability \cite{Howard_einst_long}. For
instance, in order to describe multi-particle systems,
Schr\"{o}dinger had replaced de Broglie's waves in 3-space with
waves in configuration space, and had abandoned the notion of
particle trajectories (thereby endorsing a $\psi$-complete view).
But Einstein was dubious of this move: ``The field in a
many-dimensional coordinate space does not smell like something
real''\cite{Einstein_ehrenfest2}, and ``If only the undulatory
fields introduced there could be transplanted from the n-dimensional
coordinate space to the 3 or 4
dimensional!''\cite{Einstein_sommerfield}.

Nonetheless, even if one ignores the non-separability of entangled
quantum states, it is straightforward to show that the manner in
which such states are updated after local measurements implies a
failure of local causality if one adopts a $\psi$-complete model.
Einstein first made this argument later in 1927, as we shall see in
Sec.~\ref{SEC:historical_1927}.

\subsection{$\psi$-ontic models of quantum theory are \textbf{nonlocal}
\label{SEC:locality_theorem}}

We now demonstrate that there exists a very simple argument
establishing that \emph{all} $\psi$-ontic models (not just those
that are $\psi$-complete) must violate locality. The argument
constitutes a ``nonlocality theorem'' that is stronger than
Einstein's 1927 argument but weaker than Bell's theorem. In the next
section, we shall argue that it is in fact the content of Einstein's
1935 argument for incompleteness (the argument appearing in his
correspondence with Schr\"{o}dinger, not the EPR paper) and we shall
explore what light is thereby shed on his interpretational stance.
For now, however, we shall simply present the argument in the
clearest possible fashion.

Consider two separated parties, Alice and Bob, who each hold one
member of a pair of two-level quantum systems prepared in the
maximally entangled state $\left\vert \psi^{+}\right\rangle =\left(
\left\vert 0\right\rangle \left\vert 1\right\rangle +\left\vert
1\right\rangle \left\vert 0\right\rangle \right)/\sqrt{2}$. If Alice
chooses to implement a measurement $M_{01}$ associated with the
basis $\{\left\vert 0\right\rangle ,\left\vert 1\right\rangle \}$,
then depending on whether she obtains outcome $0$ or $1$, she
updates the quantum state of Bob's system to $\left\vert
0\right\rangle $ or $\left\vert 1\right\rangle $ respectively (these
occur with equal probability). On the other hand, if she implements
a measurement $M_{\pm}$ associated with the basis $\left\{
|+\rangle,|-\rangle\right\},$ where $|\pm\rangle = (|0\rangle \pm
|1\rangle)/\sqrt{2}$, then she updates the quantum state of Bob's
system to $\left\vert +\right\rangle $ or $\left\vert -\right\rangle
$ depending on her outcome. Although Alice cannot control which
individual pure quantum state will describe Bob's system, she can
choose which of two disjoint sets, $\{\left\vert 0\right\rangle
,\left\vert 1\right\rangle \}$ or $\{\left\vert +\right\rangle
,\left\vert -\right\rangle \},$ it will belong to. Schr\"{o}dinger
described this effect as `steering' Bob's state
\cite{schroed_steer}.

This steering phenomenon allows us to prove the following
theorem\footnote{Note that one might suppose that the conclusion of
Theorem \ref{THRM:psi_ont_nonlocal} can be arrived at more simply by
the line of reasoning,
$\psi$-ontic$\rightarrow\lnot{S}\rightarrow\lnot{L}$. However, it is
not clear whether the ability to supplement a $\psi$-ontic model
with `hidden variables' allows one to alleviate a violation of
separability within $\psi$-ontic models.}.
\begin{theorem}
Any $\psi$-ontic ontological model that reproduces the quantum
statistics (QSTAT) violates locality,\footnote{Note that no notion
of `realism' appears in our implication.  This is because there is
no sense in which there is an assumption of realism that could be
abandoned while salvaging locality.  There \emph{is} a notion of
realism at play when we grant that experimental procedures prepare
and measure properties of systems, but it is a \emph{prerequisite}
to making sense of the notion of locality. Norsen has emphasized
this point \cite{norsen_blrealism,norsen_againstrealism}.}
\[
\psi\text{-ontic}\wedge\text{QSTAT}\rightarrow\lnot\text{L.}
\]
\label{THRM:psi_ont_nonlocal}
\end{theorem}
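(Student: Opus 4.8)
The plan is to run a reductio: assume an ontological model that is $\psi$-ontic, reproduces QSTAT, and is local (hence separable and locally causal in the sense of Definition~\ref{DEF:bell_local}), and derive a contradiction from the steering scenario just described. First I would invoke separability to write the ontic state of the Alice--Bob pair as $\lambda = (\lambda_A,\lambda_B)$ with $\lambda_A \in \Lambda_A$, $\lambda_B \in \Lambda_B$, so that Bob's ontic state $\lambda_B$ is a well-defined local variable. The maximally entangled preparation induces some joint distribution $p(\lambda_A,\lambda_B|\psi^+)$ over $\Lambda_A \times \Lambda_B$, and marginalizing gives a fixed distribution $p(\lambda_B)$ over Bob's side.

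Next I would bring in local causality. Alice's choice of measurement ($M_{01}$ versus $M_{\pm}$) and her outcome are events confined to a region space-like separated from Bob's; taking $C$ to be (part of) the backward light cone region that fixes the prepared state, local causality (Definition~\ref{DEF:locally_causal}) forces $p(\lambda_B \mid \text{Alice's setting, Alice's outcome}, \psi^+) = p(\lambda_B \mid \psi^+)$. In particular, Bob's marginal distribution over $\Lambda_B$ after Alice's intervention is the \emph{same} fixed distribution, independent of which basis Alice measured in and which result she got.

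Now I would use QSTAT together with $\psi$-onticity on Bob's side. After Alice measures $M_{01}$ and gets outcome $0$ (probability $1/2$), quantum theory says Bob's system is in state $|0\rangle$; since the model reproduces the quantum statistics, the post-measurement ensemble on Bob's side that is labelled by this event must be an epistemic state for $|0\rangle$, i.e. supported on $\mathrm{supp}\,p(\lambda_B|0)$; similarly outcome $1$ gives support in $\mathrm{supp}\,p(\lambda_B|1)$. But because the model is $\psi$-ontic, $p(\lambda_B|0)p(\lambda_B|1) = 0$, so these two supports are disjoint, and together (weighted $1/2,1/2$) they recover $p(\lambda_B)$. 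The same argument applied to $M_{\pm}$ shows $p(\lambda_B)$ decomposes as a $1/2,1/2$ mixture of an epistemic state for $|+\rangle$ and one for $|-\rangle$, whose supports are likewise disjoint. Hence $p(\lambda_B)$ must be supported inside $\mathrm{supp}\,p(\lambda_B|0)\cup\mathrm{supp}\,p(\lambda_B|1)$ \emph{and} inside $\mathrm{supp}\,p(\lambda_B|+)\cup\mathrm{supp}\,p(\lambda_B|-)$. The contradiction: take any $\lambda_B$ in the support of $p(\lambda_B)$; it lies in the support of exactly one of $\{|0\rangle,|1\rangle\}$, say $|0\rangle$, and in the support of exactly one of $\{|+\rangle,|-\rangle\}$, say $|+\rangle$. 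Then $p(\lambda_B|0)p(\lambda_B|+) \neq 0$, but $|0\rangle$ and $|+\rangle$ are distinct quantum states, contradicting the $\psi$-ontic condition $p(\lambda_B|0)p(\lambda_B|+)=0$.

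The step I expect to require the most care is the rigorous justification that, under local causality plus QSTAT, the post-measurement conditional distribution on Bob's side associated with a given Alice-outcome really is an epistemic state for the corresponding pure quantum state (rather than merely consistent with it in some weaker average sense) — one has to argue that the ontological model, being an ontological model of \emph{quantum theory}, must treat the steered ensemble on Bob's side as a genuine preparation of the relevant pure state so that Definitions~\ref{DEF:psi_ontic} applies to it. A secondary subtlety, already flagged in the footnote to the theorem, is that separability alone does not obviously fail for $\psi$-ontic models, so the argument genuinely needs local causality and cannot be shortcut through non-separability. I would also note that ``disjoint support'' should be read in the measure-theoretic sense of footnote~\ref{FNOTE:fidelity} if one wants the cleanest statement, but the coarser product-of-densities version in Definition~\ref{DEF:psi_ontic} suffices for the contradiction as stated.
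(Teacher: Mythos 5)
Your argument is essentially the paper's own proof: you treat Alice's outcomes as remote preparations of Bob's system, decompose the setting-unconditioned distribution on $\Lambda_B$ as the equal-weight mixture of the outcome-conditioned epistemic states (the paper's Eqs.~(\ref{decomps_in_om_01})--(\ref{decomps_in_om_pm})), use local causality to identify that unconditioned distribution across the two settings (Eq.~(\ref{bell_locality})), and conclude that some $\lambda_B$ must lie in the overlap of epistemic states for two distinct pure states, contradicting Definition~\ref{DEF:psi_ontic}; the only cosmetic difference is that you extract the overlap by a support/partition argument where the paper multiplies the two mixture decompositions. One caveat: your intermediate claim that local causality forces $p(\lambda_B\mid\text{setting},\text{outcome},\psi^+)=p(\lambda_B\mid\psi^+)$ is too strong and, read literally, false --- Bell's condition screens off region $A$ only given the complete ontic state $\lambda_C$, not given the quantum state $\psi^+$, and conditioning on Alice's \emph{outcome} legitimately updates Bob's distribution even in a perfectly local model (that Bayesian updating is exactly what the $\tfrac12,\tfrac12$ decompositions encode). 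What local causality (together with the free choice of setting) buys is only the setting-independence of the unconditioned distribution, which is all your subsequent reasoning actually uses, so the slip does not damage the proof; note also that mere no-signalling would not suffice here, as the paper's footnote to Eq.~(\ref{bell_locality}) points out. Finally, the subtlety you flag about whether the steered ensembles count as preparations to which Definition~\ref{DEF:psi_ontic} applies is resolved in the paper by its operational notion of a preparation as a list of instructions, so remote preparations $P_0,P_1,P_+,P_-$ qualify without further argument.
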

\begin{proof}
The measurements that Alice performs can be understood as `remote
preparations' of Bob's system (recall from Sec.~\ref{SEC:om_intro}
that a preparation is simply a list of experimental instructions and
therefore need not involve a direct interaction with the system
being prepared). Denote by $P_{0}$ and $P_{1}$ the remote
preparations corresponding to Alice measuring $M_{01}$ and obtaining
the $0$ and $1$ outcomes respectively (these preparations are
associated with the states $\left\vert 0\right\rangle $ and
$\left\vert 1\right\rangle $ of Bob's system). Let $P_{+}$ and
$P_{-}$ be defined similarly. Finally, denote by $P_{01}$ the remote
preparation that results from a measurement of $M_{01}$ but wherein
one does not condition on the outcome, and similarly for $P_{\pm}.$
\ Given these definitions, we can infer that,
\begin{align}
p(\lambda|P_{01})  &
=\frac{1}{2}p(\lambda|P_{0})+\frac{1}{2}p(\lambda
|P_{1})\, ,\label{decomps_in_om_01}\\
p(\lambda|P_{\pm})  &
=\frac{1}{2}p(\lambda|P_{+})+\frac{1}{2}p(\lambda
|P_{-})\, ,\label{decomps_in_om_pm}%
\end{align}
where $\lambda$ is the ontic state of Bob's system, which is
well-defined by virtue of the assumption of separability.
Eqs.~(\ref{decomps_in_om_01}) and (\ref{decomps_in_om_pm}) are
justified by noting that the probability one assigns to $\lambda$ in
the unconditioned case is simply the weighted sum of the probability
one assigns in each of the conditioned cases, where the weights are
the probabilities for each condition to hold \cite{Spekkens_con}.

The proof is by contradiction. The assumption of local causality
implies that the probabilities for Bob's system being in various
ontic states are independent of the measurement that Alice performs.
Consequently,\footnote{Note that an assumption of no superluminal
signalling is not sufficient to obtain Eq.~(\ref{bell_locality})
because $p(\lambda|P_{01})$ and $p(\lambda|P_{\pm})$ could change
non-locally, but in such a way that every indicator function on
system $B$ that corresponds to a possible measurement is unable to
distinguish $p(\lambda|P_{01})$ from $p(\lambda|P_{\pm})$ despite
their differences.}
\begin{equation}
p(\lambda|P_{01})=p(\lambda|P_{\pm}).\label{bell_locality}
\end{equation}
Multiplying together Eqs.~(\ref{decomps_in_om_01}) and (\ref{decomps_in_om_pm}%
) and making use of Eq.~(\ref{bell_locality}), we obtain,
\begin{align}
4\,p\left(\lambda|P_{01}\right)^{2} &= p\left(\lambda|P_{+}\right)
p\left(\lambda|P_{0}\right) + p\left(\lambda|P_{+}\right)
p\left(\lambda|P_{1}\right)
  \nonumber\\
& + p\left(\lambda|P_{-}\right)  p\left(\lambda|P_{0}\right)
+p\left(\lambda|P_{-}\right)  p\left(\lambda|P_{1}\right).
\end{align}
Therefore, for any $\lambda$ within the support of $p\left(  \lambda
|P_{01}\right)$ (a non-empty set), we must have,
\begin{align}
&  p\left(\lambda|P_{+}\right)  p\left(\lambda|P_{0}\right) +p\left(
\lambda|P_{+}\right)  p\left(\lambda|P_{1}\right)  \nonumber\\
&  +p\left(\lambda|P_{-}\right)  p\left(\lambda|P_{0}\right)
+p\left(\lambda|P_{-}\right)  p\left(\lambda|P_{1}\right)  >0,
\end{align}
which requires that at least one of the following inequalities be
satisfied,
\begin{align}
p\left(\lambda|P_{+}\right)  p\left(\lambda|P_{0}\right)   &
>0,\nonumber\\
p\left(\lambda|P_{+}\right)  p\left(\lambda|P_{1}\right)   &
>0,\nonumber\\
p\left(\lambda|P_{-}\right)  p\left(\lambda|P_{0}\right)   &
>0,\nonumber\\
p\left(\lambda|P_{-}\right)  p\left(\lambda|P_{1}\right)   &
>0.
\end{align}
It follows that there exists at least one pair of distinct quantum
states (either $\left\vert +\right\rangle ,\left\vert 0\right\rangle
$ or $\left\vert +\right\rangle ,\left\vert 1\right\rangle $ or
$\left\vert -\right\rangle ,\left\vert 0\right\rangle $ or
$\left\vert -\right\rangle ,\left\vert 1\right\rangle )$ such that
the epistemic states associated with them are overlapping on the
ontic state space. By Definition \ref{DEF:psi_epistemic}, we infer
that the ontological model must therefore be $\psi$-epistemic.
\end{proof}

\section{Reassessing Einstein's arguments for incompleteness \label{SEC:historical}}

\subsection{The EPR incompleteness argument}

It is well known that Einstein disputed the claim that the quantum
state represented a complete description of reality on the grounds
that such a view implied a failure of locality. Einstein's views on
the matter are often assumed to be well represented by the contents
of the EPR paper \cite{EPR}. There is, however, strong evidence
suggesting that this is far from the truth. Einstein describes his
part in the paper in a letter to Schr\"{o}dinger dated June 19, 1935
\cite{EtoS1935}:
\begin{quote}
``For reasons of language this [paper] was written by Podolsky after
many discussions. But still it has not come out as well as I really
wanted; on the contrary, the main point was, so to speak, buried by
the erudition.''
\end{quote}
Fine describes well the implications of these comments:
\cite{FineEcritique}.
\begin{quote}
``I think we should take in the message of these few words: Einstein did not
write the paper, Podolsky did, and somehow the central point was obscured. No
doubt Podolsky (of Russian origin) would have found it natural to leave the
definite article out of the title [Can quantum mechanical description be
considered complete?]. Moreover the logically opaque structure of the piece is
uncharacteristic of Einstein's thought and writing. There are no earlier
drafts of this article among Einstein's papers and no correspondence or other
evidence that I have been able to find which would settle the question as to
whether Einstein saw a draft of the paper before it was published. Podolsky
left Princeton for California at about the time of submission and it could
well be that, authorized by Einstein, he actually composed it on his own.''
\end{quote}

A more accurate picture of Einstein's views is achieved by looking
to his own publications and his correspondence. Although it is not
widely known, Einstein presented a simple argument for
incompleteness at the 1927 Solvay conference. Also, in the letter to
Schr\"{o}dinger that we quote above, Einstein gives his own argument
for incompleteness, which makes use of a similar
\textit{gedankenexperiment} to the one described in the EPR paper,
but has a significantly different logical structure.

Before turning to the details of these two arguments, we summarize
the time-line of their presentation relative to EPR,
\begin{itemize}
\item \textbf{October 1927:} Einstein presents an incompleteness argument at
the Solvay conference \cite{bac_valentini}.
\item \textbf{May 1935:} The EPR argument for incompleteness is published \cite{EPR}.
\item \textbf{June 1935:} Einstein presents an incompleteness argument,
differing substantially from the EPR argument, in his correspondence
with Schr\"{o}dinger \cite{EtoS1935}. (This first appears in print
in March 1936 \cite{EPhysik}.) We will refer to this as Einstein's
1935 argument, not to be confused with the conceptually distinct EPR
argument from the same year.
\end{itemize}

\subsection{Einstein's 1927 incompleteness argument\label{SEC:historical_1927}}

Einstein's first public argument for the incompleteness of quantum
mechanics was presented during the general discussion at the 1927
Solvay conference \cite{bac_valentini}. Einstein considered a
\textit{gedankenexperiment} in which electron wave-functions are
diffracted through a small opening, so that they then impinge upon a
hemispherical screen, as illustrated in Fig.~\ref{FIG:hemisphere}.
\begin{figure}[t]
\includegraphics[scale=0.8]{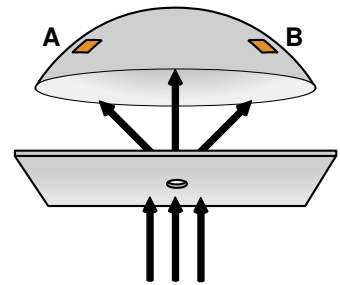}
\caption{Einstein's 1927 Gedankenexperiment, in which a single
particle wavefunction (blue) diffracts at a small opening (bottom)
before impinging upon a hemispherical detector (top). According to
quantum mechanics, the probability of a double detection at two
distinct regions $A$ and $B$ of the detector is zero.}
\label{FIG:hemisphere}
\end{figure}
He noted that \cite{E1927},
\begin{quote}
\textquotedblleft The scattered wave moving towards [the screen]
does not show any preferred direction. If $|\psi|^{2}$ were simply
regarded as the probability that at a certain point a given particle
is found at a given time, it could happen that the \textit{same}
elementary process produces an action in \textit{two or several}
places of the screen. But the interpretation, according to which
$|\psi|^{2}$ expresses the probability that \textit{this} particle
is found at a given point, assumes an entirely peculiar mechanism of
action at a distance which prevents the wave continuously
distributed in space from producing an action in \textit{two} places
on the screen.\textquotedblright
\end{quote}

Norsen has presented the essence of this argument in an elegant
form\footnote{The form of the argument is chosen to parallel the
form of Bell's argument in order to make evident the hypocrisy of a
widespread tendency among commentators to praise Bell's reasoning
while rejecting Einstein's.} that we reproduce here
\cite{norsenboxes}. Consider two points $A$ and $B$ on the screen
and denote by $1_{A}$ and $0_{A}$ respectively the cases where there
is or isn't an electron detected at $A$ (and similarly for $B$). We
take the initial quantum state of the electron to be of the form,
\begin{equation}
|\psi\rangle=\frac{1}{\sqrt{2}}(|A\rangle + |B\rangle),
\end{equation}
where $|A(B)\rangle$ is the quantum state that leads to an electron
detection at $A(B)$. Now suppose that one considers an ontological
model of the scenario, employing ontic states $\lambda\in\Lambda$.
Then the probability of obtaining a simultaneous detection at both
sites $A$ and $B$ is given by $p(1_{A}\wedge
{1}_{B}|\lambda)=p(1_{A}|\lambda)p(1_{B}|1_{A},\lambda)$. Suppose
furthermore that the model describing these events is assumed to be
local, then we can write $p(1_{B}|1_{A},\lambda)=p(1_{B}|\lambda)$
and thus $p(1_{A}\wedge
{1}_{B}|\lambda)=p(1_{A}|\lambda)p(1_{B}|\lambda)$. If the model is
taken to satisfy $\psi$-completeness then $\lambda=\psi$, and we
infer that,
\begin{equation}
p(1_{A}\wedge{1}_{B}|\psi)=p(1_{A}|\psi)p(1_{B}|\psi).
\label{probs_local}
\end{equation}
Inserting the quantum mechanical predictions $p(1_{A}|\psi)=p(1_{B}
|\psi)=\frac{1}{2}$, we obtain
$p(1_{A}\wedge{1}_{B}|\psi)=\frac{1}{4}$, which entails a nonzero
probability for simultaneous detections at both $A$ and $B$, in
stark contradiction with what is predicted by quantum mechanics.

Hence the logical structure of this rendition of Einstein's 1927
argument is that
$\text{L}\wedge\text{QSTAT}\wedge\psi\text{-complete}
\rightarrow\text{contradiction}$, i.e., that,
\begin{equation}
\text{L}\wedge\text{QSTAT}\rightarrow\psi\text{-incomplete}.
\label{E1927_logical}
\end{equation}

Note that, unlike the 1935 argument to which we shall turn in the
next section, the 1927 argument cannot be used to show locality to
be at odds with more general $\psi$-ontic models because if $\psi$
is supplemented with a hidden variable $\omega$, then the complete
description of the system is $\lambda=(\psi,\omega)$, and
Eq.~(\ref{probs_local}) is replaced by,
\begin{equation}
p(1_{A}\wedge{1}_{B}|\psi,\omega)=p(1_{A}|\psi,\omega)p(1_{B}|\psi,\omega).
\end{equation}
Because there is no reason to assume that
$p(1_{A}|\psi,\omega)=p(1_{A}|\psi)$ nor that
$p(1_{B}|\psi,\omega)=p(1_{B}|\psi)$ (conditioning on the hidden
variable will in general change the probability of detection), one
can no longer infer a nonzero probability for simultaneous
detections at both $A$ and $B$, and the contradiction is blocked.

\subsection{Einstein's 1935 incompleteness argument\label{SEC:historical_1935}}

In his 1935 correspondence with Schr\"{o}dinger, after noting that
the EPR paper did not do justice to his views, Einstein presents a
different version of the argument for incompleteness. The argument
differs markedly from that of the EPR paper from the very outset by
adopting a different notion of completeness \cite{EtoS1935},

\begin{quote}
``[...] one would like to say the following: $\psi$ is correlated
one-to-one with the real state of the real system. [...] If this
works, then I speak of a complete description of reality by the
theory. But if such an interpretation is not feasible, I call the
theoretical description `incomplete'.''
\end{quote}

It is quite clear that by `real state of the real system', Einstein
is referring to the ontic state pertaining to a system. Bearing this
in mind, his definition of completeness can be identified as
precisely our notion of $\psi $-completeness given in Definition
\ref{DEF:psi_complete}. Einstein then re-iterates to Schr\"{o}dinger
the beginning of the EPR argument, starting by considering a joint
system $(AB)$ to be prepared in an entangled state by some
`collision' between the subsystems $A$ and $B$. He then emphasizes
(what we would now call) the `steering phenomenon' by noting how a
choice of measurement on $A$ can result in the subsystem $B$ being
described by one of two quantum states $\psi_B$ or
$\psi_{\underline{B}}$.

Einstein then uses this scenario to derive his preferred proof of
incompleteness,

\begin{quote}
``Now what is essential is exclusively that $\psi_{B}$ and
$\psi_{\underline{B}}$ are in general different from one another. I
assert that this difference is incompatible with the hypothesis that
the description is correlated one-to-one with the physical reality
(the real state). After the collision, the real state of (AB)
consists precisely of the real state of A and the real state of B,
which two states have nothing to do with one another. \emph{The real
state of B thus cannot depend upon the kind of measurement I carry
out on A.} ('Separation hypothesis' from above.) But then for the
same state of B there are two (in general arbitrarily many) equally
justified $\psi_{B}$, which contradicts the hypothesis of a
one-to-one or complete description of the real states.''
\end{quote}

Einstein is clearly presuming separability with his assertion that
``the real state of (AB) consists precisely of the real state of A
and the real state of B''.
He furthermore appeals to local causality when he asserts that ``The
real state of B thus cannot depend upon the kind of measurement I
carry out on A'', because he is ruling out the possibility of events
at A having causes in the space-like separated region B.

Now, although Einstein's conclusion is nominally to deny
$\psi$-completeness, he does so by showing that there can be many
quantum states associated with the same ontic state, ``for the same
state of B there are two (in general arbitrarily many) equally
justified $\psi_{B}$''. The proof need not have taken this form. An
alternative approach would have been to try to deny
$\psi$-completeness by showing that there are many ontic states
associated with the same quantum state. \ For our purposes, this
distinction is critical because what Einstein has shown through his
argument is that a variation in $\psi$ need not correspond to a
variation in the ontic state. \ Recalling Definition
\ref{DEF:psi_ontic}, we see that Einstein has established the
failure of $\psi $-onticness!\ His 1935 incompleteness argument
rules out $\psi$-onticness \emph{en route} to ruling out
$\psi$-completeness.

The structure of his argument, in our terminology, is:
\begin{equation}
\text{L}\wedge\text{QSTAT}\rightarrow\lnot\text{(}\psi\text{-ontic)}%
\rightarrow\psi\text{-incomplete.}%
\end{equation}
But the second implication is actually a weakening of the
conclusion, because among the $\psi$-incomplete models are some
which are $\psi$-ontic (those we have called $\psi$-supplemented)
and the argument is strong enough to rule these out.

Einstein would have done better, therefore, to characterize his
argument as,
\[
\text{L}\wedge\text{QSTAT}\rightarrow\lnot \text{(}\psi\text{-ontic)},\\
\]

which is our Theorem \ref{THRM:psi_ont_nonlocal}.

\section{Historical Implications
\label{SEC:historical_implications}}

\subsection{A puzzle}
What can we gain from this retrospective assessment of Einstein's
incompleteness arguments? There is one long-standing puzzle that it
helps to solve: why did Einstein ever switch from the simple 1927
argument, which involves only a single measurement, to the 1935
argument, which involves two?

The move he made in 1935 to the two measurement argument described
in Sec.~\ref{SEC:historical_1935} proved to be a permanent one. He
published the argument for the first time in 1936 \cite{EPhysik} and
from this point onwards, the 1935 argument proved the mainstay of
his assault on orthodox quantum theory, appearing in various
writings \cite{Edialectica,EtoB}, most notably his own
autobiographical notes \cite{Eautobiog}. In fact, there is evidence
to suggest that this argument was still on Einstein's mind as late
as 1954 \cite{sauer}.


Many commentators have noted that an EPR-style argument for
incompleteness can be made even if one imagines that only a single
measurement is performed \cite{Hardy1995,Redhead,Fine,Maudlin}. The
resulting argument is similar to Einstein's 1927 argument, although
it differs insofar as it appeals to a pair of systems rather than a
single particle and makes use of the EPR criterion for reality
rather than the assumption of $\psi$-completeness. Nonetheless, the
point being made by these authors is the same as the one we have
just noted: having multiple possible choices of measurement is not
required to reach the conclusion of incompleteness from the
assumption of locality. Furthermore, the extra complication actually
detracts from the argument (whether it follows the reasoning of the
EPR paper or Einstein's correspondence with Schr\"{o}dinger),
because it introduces counterfactuals and modal logic into the game,
and this is precisely where most critics, including Bohr
\cite{Bohr_replytoEPR}, have focussed their attention. The single
measurement versions of the argument are, of course, completely
immune to such criticisms.

One explanation that has been offered for Einstein's move to two
measurements is that one can thereby land a harder blow on the
proponent of the orthodox approach by also defeating the uncertainty
principle in the course of the argument. Maudlin refers to this
``extra twist of the knife'' as ``an unnecessary bit of
grandstanding (probably due to Podolsky)''\cite{Maudlin}. Although
this may be an accurate assessment of what is going on in the EPR
paper, it does not explain Einstein's post-1935 conversion to the
two-measurement form of the argument. Indeed, Einstein explicitly
de-emphasizes the uncertainty principle in his own writings. For
instance, in his 1935 letter to Schr\"{o}dinger, he remarks: ``I
couldn't care less\footnote{``ist mir \textit{wurst}'' (emphasis in
original).} whether $\psi_{B}$ and $\psi_{\underline{B}}$ can be
understood as eigenfunctions of observables $B$,
$\underline{B}$''."\cite{EtoS1935}


Another explanation worth considering concerns the experimental
significance of the two gedankenexperiments. Although Einstein's
incompleteness arguments imply a dilemma between $\psi$-completeness
and locality, a sceptic who conceded the validity of the argument
could still evade the dilemma by choosing to reject some part of
quantum mechanics, specifically, those aspects that were required to
reach Einstein's conclusion. To eliminate this possibility, one
would have to provide experimental evidence in favor of these
aspects. From this perspective, there is a significant difference
between the 1927 and 1935 gedankenexperiments. In the case of the
former, the measurement statistics to which Einstein appeals
(perfect anti-correlation of measurements of local particle number)
can also be obtained from the mixed state
$\tfrac{1}{2}(|A\rangle\langle A|+|B\rangle\langle B|)$ rather than
the pure state $(1/\sqrt{2})(|A\rangle+|B\rangle)$. It follows that
the sceptic could avoid the dilemma by positing that such coherence
was illusory. To convince the sceptic, further experimental data --
for instance, a demonstration of coherence via interference -- would
be required. On the other hand, the measurement statistics of the
1935 gedankenexperiment cannot, in general, be explained under the
sceptic's hypothesis (which in this case amounts to positing a
separable mixed state). Indeed, \textit{any} hypothesis that takes
system $B$ to be in a mixture of pure quantum states (that are
unaffected by events at $A$) can be ruled out by the 1935 set-up
because the latter allows one to make predictions about the outcomes
of incompatible measurements on $B$ that are in violation of the
uncertainty principle. This has been demonstrated by Reid in the
context of the EPR scenario \cite{Reid} and by Wiseman \textit{et
al.}\cite{WJD07} more generally. Although Wiseman has argued that
this provides a reason for favoring the 1935 over the 1927 version
of Einstein's incompleteness argument \cite{Wiseman06}, he does not
suggest that it was Einstein's reason. Indeed, this is unlikely to
have been the case. Certainly, we are not aware of anything in
Einstein's writings that would suggest so.\footnote{Although
Schr\"{o}dinger had some doubts about the validity of quantum
theory, these concerned whether experiments would confirm the
existence of the steering phenomenon (``I am not satisfied about
there being enough experimental evidence for
that.''\cite{schroed_steer_quote}). This sentiment was a
\textit{reaction} to the 1935 form of Einstein's argument and so
could not have motivated it. It is unlikely that anyone would have
been sceptical of the spatial coherence assumed in Einstein's 1927
argument.}

\subsection{A possible explanation}
 Our analysis of Einstein's incompleteness arguments
suggests a very different explanation. In
Sec.~\ref{SEC:historical_1935}, we demonstrated that the 1935
argument is able to prove that both $\psi $-complete \textit{and}
$\psi$-supplemented models are incompatible with a locality
assumption, leaving $\psi$-epistemic models as the only approach
holding any hope of preserving locality. In contrast, the 1927
argument cannot achieve this stronger conclusion, as was noted in
Sec.~\ref{SEC:historical_1927}. (This also follows from the fact
that the deBroglie-Bohm theory constitutes a $\psi$-supplemented
model which provides a local explanation of the 1927 thought
experiment.) One can therefore understand Einstein's otherwise
baffling abandonment of his 1927 incompleteness argument in favor of
the more complicated 1935 one by supposing that he sought to
advocate a particular kind of ontological model, namely, a
$\psi$-epistemic one. This interpretation of events is bolstered by
the fact that Einstein often followed his discussions of the
incompleteness argument with an endorsement of the epistemic view of
quantum states. We turn to the evidence of his papers and
correspondence.

In addition to his conviction that ``[...] the description afforded
by quantum-mechanics is to be viewed [...] as an incomplete and
indirect description of reality, that will again be replaced later
by a complete and direct description.'' \cite{Edialectica}, Einstein
specifically advocated that
\begin{quote}
\textquotedblleft\lbrack t]he $\psi$-function is to be understood as
the description not of a single system but of an ensemble of
systems.\textquotedblright\ \cite{schilpp},
\end{quote}
and that the meaning of the quantum state was ``similar to that of
the density function in classical statistical
mechanics.''\cite{EtoBreit}

It is not immediately obvious that this is equivalent to an
epistemic interpretation of the quantum state. We argue for this
equivalence on the grounds that the ensembles Einstein mentions are
simply a manner of grounding talk about the probabilities that
characterize an observer's knowledge. In other words, the only
difference between ``ensemble talk'' and ``epistemic talk'' is that
in the former, probabilities are understood as relative frequencies
in an ensemble of systems, while in the latter, they are understood
as characterizations of the incomplete knowledge that an observer
has of a single system when she knows the ensemble from which it was
drawn. Ultimately, then, the only difference we can discern between
the ensemble view and the epistemic view concerns how one speaks
about probabilities, and although one can debate the merits of
different conceptions of probability, we do not feel that the
distinction is significant in this context, nor is there any
indication of Einstein having thought so.

Indeed, in a 1937 letter to Ernst Cassirer, Einstein seems to use
the two manners of characterizing his view \emph{interchangeably} as
he spells out what conclusion should be drawn from his 1935
incompleteness argument \cite{EtoCassirer},

\begin{quote}
``[...] this entire difficulty disappears if one relates $\psi_{2}$
not to an individual system but, in Born's sense, to a certain
state-ensemble of material points $2$. Then, however, it is clear
that $\psi_{2}$ does not describe the totality of what ``really''
pertains to the partial system $2$, rather only what we know about
it in this particular case.''
\end{quote}

\strut Einstein's endorsement of an epistemic understanding of the
quantum state is also explicit elsewhere in his personal
correspondence (of which relevant extracts have been conveniently
collected together in essays by Fine and Howard
\cite{Fine:Eincontext,Howard_einst_short,Howard_einst_long}). For
instance, in a 1945 letter to Epstein, after providing an
incompleteness argument containing all the features of the one used
in 1935, Einstein concludes that \cite{EtoEpstein},

\begin{quote}
``Naturally one cannot do justice to [the argument] by means of a
wave function. Thus I incline to the opinion that the wave function
does not (completely) describe what is real, but only a to us
empirically accessible maximal knowledge regarding that which really
exists [...] This is what I mean when I advance the view that
quantum mechanics gives an incomplete description of the real state
of affairs.''
\end{quote}

Perhaps the most explicit \strut example occurs in a 1948 reply to
Heitler, criticizing Heitler's notion that the observer plays an
important role in the process of wave-function collapse, and
advocating \cite{EtoHeitler},

\begin{quote}
\textquotedblleft that one conceives of the psi-function only as an
incomplete description of a real state of affairs, where the
incompleteness of the description is forced by the fact that
observation of the state is only able to grasp part of the real
factual situation. Then one can at least escape the singular
conception that observation (conceived as an act of consciousness)
influences the real physical state of things; the change in the
psi-function through observation then does not correspond
essentially to the change in a real matter of fact but rather to the
alteration in \textit{our knowledge} of this matter of
fact.\textquotedblright (emphasis in original)
\end{quote}

The result, implicit in Einstein's 1935 argument, that the only
realistic interpretation of quantum states that could possibly be
local are $\psi$-epistemic, is of course superseded by Bell's
theorem \cite{Bell_locality}. The latter famously demonstrates that
\textit{any} theory providing an adequate description of nature must
violate locality (as emphasized in
Refs.~\cite{norsen_blrealism,norsen_epr}). We do not dispute this.
The point we wish to make is simply that the `big guns' of Bell's
theorem are \textit{only needed to deal with $\psi $-epistemic
models}. Any $\psi$-ontic model can be seen to be non-local by an
argument that appeared in print as far back as 1936.

Therefore, in the 28 years between the publication of Einstein's
1935 incompleteness argument (in 1936) and the publication of Bell's
theorem (in 1964), only $\psi$-epistemic ontological models were
actually viable to those who were daring enough to defy convention
and seek an interpretation that preserves locality. Why is it then
that during the pre-Bell era, there was not a greater recognition
among such researchers of the apparent promise of $\psi$-epistemic
approaches vis-a-vis locality?

It seems likely to us that the distinction between
$\psi$-supplemented and $\psi$-epistemic hidden variable models was
simply not sufficiently clear. One searches in vain for any
semblance of a distinction in Einstein's description of the
alternative to the orthodox $\psi$-complete view during the general
discussion at the 1927 Solvay conference. But nothing in what we
have said would lead one to expect that Einstein had clearly
understood the distinction as early as 1927. What \textit{is}
surprising is that, after 1935, Einstein seems to voice his support
for an epistemic view of $\psi$ in his papers and correspondence,
and yet never bothers to articulate, nor explicitly denounce, the
other way in which his bijective notion of completeness
($\psi$-completeness) could fail, namely, by $\psi$ being ontic but
supplemented with additional variables.

By characterizing his 1935 argument as one that merely established
the \textit{incompleteness} of quantum theory on the assumption of
locality, Einstein did it a great disservice. For in isolation, a
call for the \textit{completion} of quantum theory would naturally
have led many to pursue hidden variable theories that interpreted
the fundamental mathematical object of the theory, the wave
function, in the same manner in which the fundamental object of
other physical theories were customarily treated -- as ontic. But
such a strategy was known by Einstein to be unable to preserve
locality.
Thus it is likely that the force of Einstein's 1935 argument from
locality to the epistemic interpretation of $\psi$ was not felt
simply because the argument was not sufficiently well articulated.

A proper assessment of the plausibility of these historical
possibilities would require a careful reexamination of Einstein's
papers and correspondence with the distinction between $\psi$-ontic
and $\psi$-epistemic ontological models in mind. We hope that such a
reassessment might yield further insight into the history of
incompleteness and nonlocality arguments.

\section{The future of $\psi$-epistemic models
\label{SEC:discussion}}

Bell's theorem shows that the preservation of locality is not a
motivation for a $\psi$-epistemic ontological model, because it
cannot be maintained. However, it does not provide any reason for
preferring a $\psi$-ontic approach over one that is
$\psi$-epistemic; it is neutral on this front. Moreover, there are
many new motivations (unrelated to locality) that can now be
provided in favor of $\psi$-epistemic models. For instance, it is
shown in Refs.~\cite{toy_theory,BRSLiouville} that
information-theoretic phenomena such as teleportation, no-cloning,
the impossibility of discriminating non-orthogonal states, the
information-disturbance trade-off, aspects of entanglement theory,
and many others, are found to be derivable within toy theories that
presume hidden variables and wherein the analogue of $\psi$ is a
state of incomplete knowledge. This interpretation of $\psi$ is
further supported by a great deal of foundational work that does not
presuppose hidden variables
\cite{Emerson,Fuchs,FuchsJmodopt,Ballentine70,Ballentine94,Peierls,Leiferarxiv,Leiferpra,CFS02,CFS02arxiv,CFS06}.
$\psi$-epistemic ontological models are therefore deserving of more
attention than they have received to date.

However, it remains unclear to what extent a $\psi$-epistemic
ontological model of quantum theory is even possible. Recall that
the Kochen-Specker model discussed in Sec.~\ref{SEC:example_ksmodel}
secured such an interpretation for pure states and projective
measurements in a two-dimensional Hilbert space. But can one be
found in more general cases? \footnote{Hardy was perhaps the first
to lay down this challenge explicitly \cite{Hardyprivate}.}

We here need to dispense with a possible confusion that might arise.
In the same paper wherein they presented their 2d model, Kochen and
Specker proceed to prove a no-go theorem for certain kinds of
ontological models seeking to reproduce the predictions of quantum
mechanics in 3d Hilbert spaces. \ One might therefore be led to the
impression that Kochen and Specker rule out $\psi$-epistemic models
for 3d Hilbert spaces. \ This is not the case, however, as we now
clarify.

As soon as one moves to projective measurements in a Hilbert space
of dimension greater than two, it is possible to define a
distinction between contextual and noncontextual ontological models
\cite{Spekkens_con}. \ It was famously shown by Bell
\cite{Bell_probhv} and independently by Kochen and Specker \cite{Ks}
that noncontextual ontological models cannot reproduce the
predictions of quantum theory for Hilbert space dimension 3 or
greater. \ Furthermore, the notion of noncontextuality can be
extended from projective measurements to nonprojective measurements,
preparations, and transformations \cite{Spekkens_con}. \ In all
cases, one can demonstrate a negative verdict for noncontextual
models of quantum theory \cite{Spekkens_con}. \ Indeed, by moving
beyond projective measurements, one finds that noncontextual models
cannot even be constructed for a two-dimensional Hilbert space.\

But the dichotomy between contextual and noncontextual models is
independent of the dichotomy between $\psi$-ontic and
$\psi$-epistemic models. So, whereas the Bell-Kochen-Specker theorem
and variants thereof show the necessity of contextuality, these are
silent on the issue of whether one can find an ontological model
that is also $\psi$-epistemic. The ontological models of quantum
theory that we do have, such as deBroglie-Bohm, are contextual but
$\psi$-ontic. Bell \cite{Bell_probhv} even provides a very \emph{ad
hoc} example of a contextual hidden variable model (an extension of
the Bell-Mermin model of Sec.~\ref{SEC:example_bellmermin}) to prove
that such a model is possible. It too is $\psi$-ontic (although one
must have recourse to the definition appealing to fidelities
provided in footnote \ref{FNOTE:fidelity} to properly assess this
model) \cite{Barrettpc}.

Many features of deBroglie-Bohm theory have been found to be
generalizable to a broad class of ontological models. Nonlocality,
contextuality, and signalling outside of quantum equilibrium
\cite{valentini} are examples. Inspired by this pattern, Valentini
has wondered whether the pilot-wave (and hence ontic) nature of the
wave function in the deBroglie-Bohm approach might be unavoidable
\cite{Valentiniprivate}. On the other hand, it has been suggested by
Wiseman that there exists an unconventional reading of the
deBroglie-Bohm approach which is not $\psi$-ontic
\cite{Wisemanprivate}. A distinction is made between the quantum
state of the universe and the conditional quantum state of a
subsystem, defined in Ref.~\cite{DGZ_conditionalqstate}. The latter
is argued to be epistemic while the former is deemed to be nomic,
that is, law-like, following the lines of
Ref.~\cite{DurrGoldteinZanghi} (in which case it is presumably a
category mistake to try to characterize the universal wave function
as ontic or epistemic). We shall not provide a detailed analysis of
this claim here, but highlight it as an interesting possibility that
is deserving of further scrutiny. Nelson's approach to quantum
theory \cite{Nelsonbook} also purports to \textit{not} assume the
wave function to be part of the ontology of the theory
\cite{BacconNelson}. However, as pointed out by Wallstrom
\cite{Wallstrom}, the theory does not succeed in picking out all and
only those solutions of Schr\"{o}dinger's equation\footnote{It is
assumed that only continuous and single-valued wave functions are
valid, a fact that is disputed by Smolin \cite{Smolin_Nelson}.}.
Consequently, it also fails to provide a $\psi$-epistemic model of
quantum theory.

Recently, Barrett \cite{Barrettpc} has constructed a model that is
$\psi$-epistemic.  Although it only works for a countable set of
bases of the Hilbert space, it seems likely that this deficiency can
be eliminated, in which case it would be the first $\psi$-epistemic
model for a Hilbert space of arbitrary dimension. Unfortunately, the
model achieves the $\psi$-epistemic property in a very ad hoc
manner, by singling out a pair of non-orthogonal quantum states, and
demanding that the epistemic states associated with these have
non-zero overlap, while the quantum predictions are still
reproduced.  It consequently does not have the sorts of features,
outlined in Refs.~\cite{toy_theory,BRSLiouville}, that make the
$\psi$-epistemic approach compelling.  This suggests that the
interesting question is not simply whether a $\psi$-epistemic model
can be constructed, but whether one can be constructed with certain
additional properties, such as the property that the classical
fidelity between epistemic states associated with a given pair of
quantum states is invariant under all unitary transformations of the
latter.\footnote{The Kochen-Specker model discussed in
Sec.~\ref{SEC:example_ksmodel} has this feature.}

Rudolph has devised a $\psi$-epistemic contextual ontological model
that is quantitatively close to the predictions of quantum theory
for projective measurements in three-dimensional Hilbert spaces and
also has the desired symmetry property \cite{tr_model}. This model
does not, however, reproduce the quantum predictions exactly.

It is possible that a $\psi$-epistemic model with the desired
symmetry property does not exist.  However, a no-go theorem always
presumes some theoretical framework. In Sec.~\ref{SEC:om_intro} of
the present paper, we have cast ontological models in an operational
framework, wherein systems are considered in isolation and the
experimental procedures are treated as external interventions. Such
a framework may not be able to do justice to all interpretations
that have some claim to being judged realist. For instance, in
deBroglie-Bohm, a system is not separable from the experimental
apparatus and consequently it is unclear whether one misrepresents
the interpretation by casting it in our current framework (an
extension of the formalism used here is, however, to be developed in
Ref.~\cite{deficiency}). Ontological models that are fundamentally
relational might also fail to be captured by the framework described
here. Nonetheless, something would undeniably be learned if one
could prove the impossibility of a $\psi$-epistemic model with the
desired symmetry properties within an operational framework of this
sort.

\section{Acknowledgements}

We would like to acknowledge Jonathan Barrett, Travis Norsen, and
Howard Wiseman for discussions and comments, and Don Howard and
Arthur Fine for their Einstein scholarship, without which the
present work would not have been possible. We are also grateful to
Terry Rudolph for numerous discussions on this work and for having
supported the progressive rock movement by refusing to adopt a
reasonable haircut. RWS acknowledges support from the Royal Society.
NH is supported by Imperial College London and the occasional
air-guitar recital.

\end{document}